\newtheorem{Lemma}{Lemma}
\newtheorem{Definition}{Definition}
\begin{document}

\title{NED: An Inter-Graph \underline{N}ode Metric Based On \underline{E}dit \underline{D}istance}

\author{Haohan Zhu}
\author{Xianrui Meng}
\author{George Kollios}

\affil{Department of Computer Science, Boston University\authorcr \{ zhu, xmeng, gkollios \}@cs.bu.edu}
\date{}

\maketitle

\begin{abstract}
Node similarity is a fundamental problem in graph analytics. However, node similarity between nodes in different graphs (inter-graph nodes) has not received a lot of attention yet. The inter-graph node similarity is important in learning a new graph based on the knowledge of an existing graph (transfer learning on graphs) and has applications in biological, communication, and social networks. In this paper, we propose a novel distance function for measuring inter-graph \underline{n}ode similarity with \underline{e}dit \underline{d}istance, called \textbf{NED}. In NED, two  nodes are compared according to their local neighborhood structures which are represented as unordered $k$-adjacent trees, without relying on labels or other assumptions. Since the computation problem of tree edit distance on unordered trees is NP-Complete, we propose a modified tree edit distance, called \textbf{TED*}, for comparing neighborhood trees. TED* is a metric distance, as the original tree edit distance, but more importantly, TED* is polynomially computable. As a metric distance, NED admits efficient indexing, provides interpretable results, and shows to perform better  than existing approaches on a number of data analysis tasks, including graph de-anonymization. Finally, the efficiency and effectiveness of NED are empirically demonstrated  using real-world graphs.
\end{abstract}

\section{Introduction}
\label{sec: Introduction}

Node similarity is a fundamental problem in graph data analysis. Many applications use node similarity as an essential building block to develop more complex graph data mining algorithms. These applications include node classification, similarity retrieval, and graph matching.

In particular, node similarity measures between nodes in different graphs (inter-graph nodes) can have many important applications including transfer learning  across networks and graph de-anonymization~\cite{DBLP:conf/kdd/HendersonGLAETF11}. An example comes from biological networks. It has been recognized that the topological (neighborhood) structure of a node in a biological network (e.g., a PPI network) is related to the functional and biological properties of the node~\cite{DBLP:journals/bioinformatics/DavisYMSP15}. Furthermore, with the increasing production of new biological data and networks, there is an increasing need to find nodes in these new networks that have similar topological structures (via similarity search) with nodes in already analyzed and explored networks~\cite{DBLP:journals/bioinformatics/ClarkK14}. Notice that additional information on the nodes can be used to enhance the distance function that we compute using the network structure around the node.
Another application comes from communication networks.  Consider a set of IP communication graphs from different days or different networks have been collected  and only one of these networks has been analyzed~\cite{DBLP:conf/kdd/HendersonGLAETF11}.  For example, nodes in one network may have been classified into classes or roles based on their neighborhood.  The question is how to use this information to classify nodes from the other networks without the need to build new classifiers (e.g., across-network classification) ~\cite{DBLP:conf/kdd/HendersonGLAETF11}. Finally, another important application of inter-graph node similarity is to use it for de-anonymization. As an example, given an anonymous social network and a non-anonymized social graph in the same domain, we can compare pairwise nodes to de-anonymize or re-identify the nodes in the anonymous social network by using the structural information from the non-anonymized communication graph \cite{DBLP:conf/kdd/HendersonGLAETF11}.

In recent years, many similarity measures for nodes in graphs have been proposed but most of them work only for nodes in the same graphs (intra-graph.) Examples include SimRank \cite{DBLP:conf/kdd/JehW02}, SimRank variants \cite{DBLP:journals/pvldb/YuLZCP13,DBLP:journals/pvldb/AntonellisGC08,DBLP:conf/kdd/JinLH11}, random walks with restart \cite{DBLP:conf/icdm/TongFP06}, and set matching methods~\cite{doi1971,ASI1973,DBLP:conf/kdd/XuYFS07,DBLP:conf/sigmod/KhanLYGCT11,DBLP:journals/pvldb/KhanWAY13}. Unfortunately these methods cannot be used for inter-graph nodes.  
Existing methods that can be used for inter-graph similarity  \cite{DBLP:conf/pakdd/AkogluMF10,DBLP:conf/asunam/BerlingerioKEF13,DBLP:conf/kdd/HendersonGLAETF11,Blondel2004}, they all have certain problems. OddBall \cite{DBLP:conf/pakdd/AkogluMF10} and NetSimile \cite{DBLP:conf/asunam/BerlingerioKEF13} only consider the features in the ego-net (instant neighbors) which limits the neighborhood information. On the other hand, more advanced methods that consider larger neighborhood structures around nodes, like  ReFeX \cite{DBLP:conf/kdd/HendersonGLAETF11} and HITS-based similarity \cite{Blondel2004} are not metric distances and the distance values between different pairs are not comparable. Furthermore, the distance values are not easy to interpret. 


In this paper, we propose a novel distance function for measuring inter-graph \underline{n}ode similarity with \underline{e}dit \underline{d}istance, called \textbf{NED}. In our measure, two inter-graph nodes are compared according to their neighborhood topological structures which are represented as  unordered $k$-adjacent trees. In particular, the NED between a pair of inter-graph nodes is equal to a modified tree edit distance called \textbf{TED*} that we also propose in this paper. We introduce TED* because the computation of the original tree edit distance on  unordered $k$-adjacent trees belongs to $\NP$-Complete. TED* is not only polynomially computable, but it also preserves all metric properties as the original tree edit distance does. TED* is empirically demonstrated to be efficient and effective when comparing trees. Compared to existing inter-graph node similarity measures, NED is a  metric and therefore  can admit efficient indexing and provides results that are interpretable, since it is based on edit distance. Moreover, since we can parameterize the depth of the neighborhood structure around each node, NED provides better quality results in our experiments, including better precision on  graph de-anonymization.  We provide a detailed experimental evaluation than show the efficiency and effectiveness of NED  using a number of real-world graphs.

Overall, in this paper we make the following contributions:
\begin{enumerate}[label={[\arabic*]}]
\item We propose a novel distance function, NED, to measure the similarity between inter-graph nodes.
\item We propose a modified tree edit distance, TED*, on unordered trees that is both metric and polynomially computable.
\item By using TED*, NED is an interpretable and precise node similarity that can capture the neighborhood topological differences.
\item We show that NED  can admit efficient indexing for similarity retrieval.
\item We experimentally evaluate NED using real datasets and we show that it performs better than existing approaches on  a de-anonymization application.
\end{enumerate}

The rest of this paper is organized as follows. Section 2 introduces related work of node similarity measurements. Section 3 proposes the unordered $k$-adjacent tree, inter-graph node similarity with edit distance which is called as NED in this paper, and the NED in directed graphs. In Section 4, we introduce TED*, our modified tree edit distance, and its edit operations. Section 5 elaborates the detailed algorithms for computing TED*. Section 6 presents the correctness proof of the algorithm and Section 7 proves the metric properties of TED*. In Section 8, we illustrate the isomorphism computability issue when comparing node similarities. Section 9 provides the analysis of the complexities and Section introduces the analysis of the only parameter $k$ in NED and the monotonicity property in NED. Section 11 theoretically compares the TED* we propose in this paper with TED and GED from two aspects: edit operations and edit distances. In Section 12, we propose a weighted version of NED. Section 13 empirically verifies the effectiveness and efficiency of our TED* and NED. Finally, we conclude our paper in Section 14.

\section{Related Work}
\label{sec: Related Work}

One major type of node similarity measure is called link-based similarity or transitivity-based similarity and is designed to compare intra-graph nodes. SimRank \cite{DBLP:conf/kdd/JehW02} and a number of SimRank variants like SimRank* \cite{DBLP:journals/pvldb/YuLZCP13}, SimRank++ \cite{DBLP:journals/pvldb/AntonellisGC08}, RoleSim \cite{DBLP:conf/kdd/JinLH11}, just to name a few, are typical link-based similarities which have been studied extensively. Other link-based similarities include random walks with restart \cite{DBLP:conf/icdm/TongFP06} and path-based similarity \cite{DBLP:journals/pvldb/SunHYYW11}. A comparative study for link-based node similarities can be found in \cite{DBLP:journals/tkde/LiuHZLD13}. Unfortunately, those link-based node similarities are not suitable for comparing inter-graph nodes since these nodes are not connected and the distances will be always $0$. 

To compare inter-graph nodes, neighborhood-based similarities have been used. Some primitive methods directly compare the ego-nets (direct neighbors) of two nodes using Jaccard coefficient, S{\o}rensen{--}Dice coefficient, or Ochiai coefficient \cite{doi1971,ASI1973,DBLP:conf/kdd/XuYFS07}. Ness \cite{DBLP:conf/sigmod/KhanLYGCT11} and NeMa \cite{DBLP:journals/pvldb/KhanWAY13} expand on this idea and they use the structure of the $k$-hop neighborhood of each node. However, for all these methods, if two nodes do not share common neighbors (or neighbors with the same labels), the distance will always be $0$, even if the neighborhoods are isomorphic to each other.

An approach that can work for inter-graph nodes is to extract features from each node using the neighborhood structure and compare these features.   OddBall \cite{DBLP:conf/pakdd/AkogluMF10} and NetSimile \cite{DBLP:conf/asunam/BerlingerioKEF13} construct the feature vectors by using the ego-nets (direct neighbors) information such as the degree of the node, the number of edges in the ego-net and so on. ReFeX \cite{DBLP:conf/kdd/HendersonGLAETF11} is a framework to construct the structural features recursively. The main problem with this approach is that the choice of features is ad-hoc and the distance function is not easy to interpret. Furthermore, in many cases, the distance function may be zero even for nodes with different neighborhoods. Actually, for the more advanced method, ReFeX, the distance method is not even a metric distance.

Another method that has been used for comparing biological networks, such as protein-protein interaction networks (PPI) and metabolic networks, is to extract a feature vector using graphlets~\cite{DBLP:journals/bioinformatics/Malod-DogninP15,DBLP:journals/bioinformatics/DavisYMSP15}. Graphlets are small connected non-isomorphic induced subgraphs of a large network~\cite{DBLP:journals/bioinformatics/PrzuljCJ04} and generalize the notion of the degree of a node. However, they are also limited to the small neighborhood around each node and as the size of the neighborhood increases the accuracy of this method decreases.

Another node similarity for inter-graph nodes based only on the network structure is proposed by Blondel et al. \cite{Blondel2004} which is called HITS-based similarity. In HITS-based similarity, all pairs of nodes between two graphs are virtually connected. The similarity between a pair of inter-graph nodes is calculated using the following similarity matrix:
\begin{equation*}
S_{k+1} = BS_kA^T + B^TS_kA
\end{equation*}
where $A$ and $B$ are the adjacency matrices of the two graphs and $S_k$ is the similarity matrix in the $k$ iteration.

Both HITS-based and Feature-based similarities are capable to compare inter-graph nodes without any additional assumption. However, HITS-based similarity is neither metric nor efficient. On the other hand, Feature-based similarities use ad-hoc statistical information which cannot distinguish minor topological differences. This means that Feature-based similarities may treat two nodes as equivalent even though they have different neighborhood structures.

\section{NED: Inter-Graph Node Similarity with Edit Distance}
\label{sec: NED: Inter-Graph Node Similarity with Edit Distance}

\subsection{K-Adjacent Tree}
\label{sec: K-Adjacent Tree}

\begin{figure}
\centering
\includegraphics[scale=0.7]{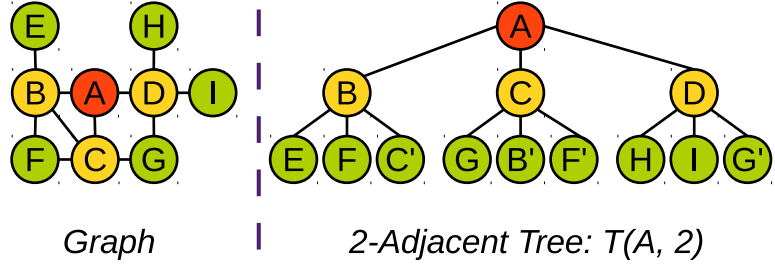}
\caption{K-Adjacent Tree}
\label{fig:K-Adjacent_Tree}
\end{figure}

First of all, we introduce the unlabeled unordered k-adjacent tree that we use to represent the neighborhood topological structure of each node. The $k$-adjacent tree was firstly proposed by Wang et al. \cite{DBLP:journals/tkde/WangWYY12} and for the completeness, we include the definition here:

\begin{Definition}
The adjacent tree $T(v)$ of a vertex $v$ in graph $G(V,E)$ is the breadth-first search tree starting from vertex $v$. The $k$-adjacent tree $T(v,k)$ of a vertex $v$ in graph $G(V,E)$ is the top $k$-level subtree of $T(v)$.
\label{Definition:K-Adjacent Tree}
\end{Definition}

The difference between the $k$-adjacent tree in \cite{DBLP:journals/tkde/WangWYY12} and the unlabeled unordered $k$-adjacent tree used in this paper is that the children of each node are not sorted based on their labels. Thus, the $k$-adjacent tree used in this paper is an unordered unlabeled tree structure.

An example of a $k$-adjacent tree is illustrated in Figure \ref{fig:K-Adjacent_Tree}. For a given node in a graph, its $k$-adjacent tree can be retrieved deterministically by using breadth first search. In this paper we use this tree to represent the topological neighborhood of a node that reflects its ``signature''  or ``behavior'' inside the graph.

In the following paper, we consider undirected graphs for simplicity. However, the $k$-adjacent tree can also be extracted from directed graphs. Furthermore, our distance metric can also be applied to directed graphs as we discuss in Section \ref{sec: NED in Directed Graphs}. Later in Section \ref{sec: Isomorphism Computability},  we explain why we chose to use a tree rather than the actual graph structure around a node as the node signature.

\subsection{NED}
\label{sec:NED}

Here, we introduce NED, the inter-graph node similarity with edit distance. Let $u$ and $v$ be two nodes from two graphs $G_u$ and $G_v$ respectively. For a given parameter $k$, two $k$-adjacent trees $T(u,k)$ and $T(v,k)$ of nodes $u$ and $v$ can be generated separately. Then, by applying the modified tree edit distance TED* on the pair of two $k$-adjacent trees, the similarity between the pair of  nodes is defined as the similarity between the pair of the two $k$-adjacent trees.

Denote $\delta^k$ as the distance function NED between two nodes with parameter $k$ and denote $\delta_T$ as the modified tree edit distance TED* between two trees. Then we have, for a parameter $k$,
\begin{equation}
\delta^k(u, v) = \delta_T(T(u,k),T(v,k))
\label{Equation:NED Equation}
\end{equation}

Notice that, the modified tree edit distance TED* is a generic distance for tree structures. In the following sections, we present the definition and algorithms for computing the proposed modified tree edit distance.

\subsection{NED in Directed Graphs}
\label{sec: NED in Directed Graphs}

In this paper, we discuss the inter-graph node similarity for undirected graphs. The $k$-adjacent tree is also defined in undirected graphs. However it is possible to extend the $k$-adjacent tree and inter-graph node similarity from undirected graphs to directed graphs. For directed graphs, there are two types of $k$-adjacent trees: incoming $k$-adjacent tree and outgoing $k$-adjacent tree. The definition of incoming $k$-adjacent tree is as follows:

\begin{Definition}
The incoming adjacent tree $T_I(v)$ of a vertex $v$ in graph $G(V,E)$ is a breadth-first search tree of vertex $v$ with incoming edges only. The incoming $k$-adjacent tree $T_I(v,k)$ of a vertex $v$ in graph $G(V,E)$ is the top $k$-level subtree of $T_I(v)$.
\label{Definition:K-Adjacent Tree in Directed Graphs}
\end{Definition}

Similarly, the outgoing adjacent tree $T_O(v)$ includes outgoing edges only.  For a given node in a directed graph, both incoming $k$-adjacent tree and outgoing $k$-adjacent tree can be deterministically extracted by using breadth-first search on incoming edges or outgoing edges only.

Based on Definition \ref{Definition:K-Adjacent Tree in Directed Graphs}, for a node $v$, there are two $k$-adjacent trees: $T_I(v)$ and $T_O(v)$.  Let $u$ be a node in the directed graph $G_u$ and $v$ be a node in the directed graph $G_v$. Then the distance function NED $\delta^k_D$ in directed graphs can be defined as:
\begin{equation}
\delta^k_D(u,v) = \delta_T(T_I(u),T_I(u)) + \delta_T(T_O(u),T_O(v))
\end{equation}

Notice that, since TED* is proved to be a metric, not only the NED defined in undirected graphs is a node metric, but the NED defined in directed graphs is also a node metric. The identity, non-negativity, symmetry and triangular inequality all preserve according to the above definition.

\section{TED*: Modified Tree Edit Distance}
\label{sec: TED*: Modified Tree Edit Distance}

\begin{figure*}
\centering
\includegraphics[scale=0.5]{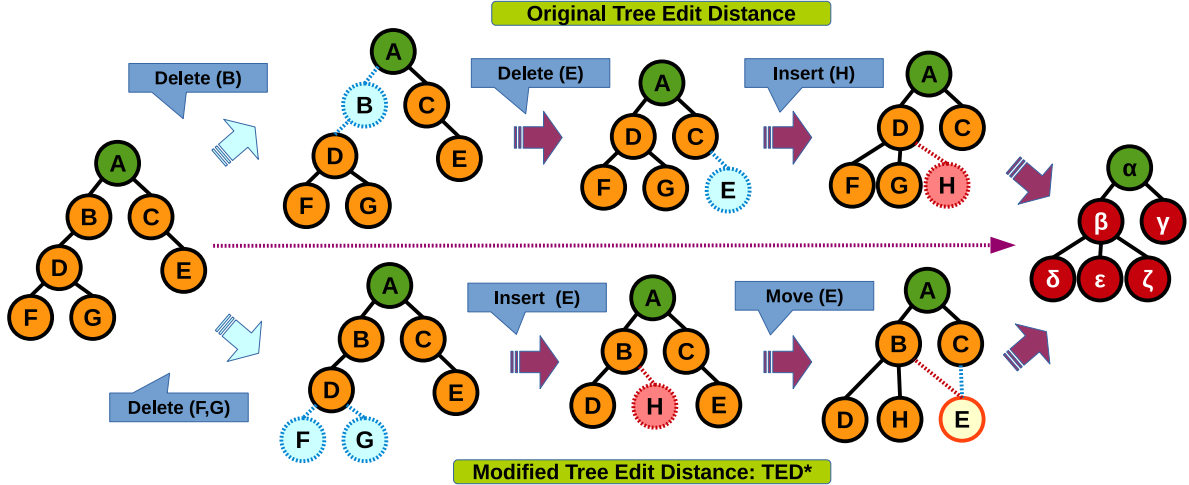}
\caption{TED* vs Tree Edit Distance}
\label{fig:TED*_TED}
\end{figure*}

The tree edit distance \cite{DBLP:journals/jacm/Tai79} is a well-defined and popular metric for tree structures. For a given pair of trees, the tree edit distance is the minimal number of edit operations which convert one tree into the other. The edit operations in the original tree edit distance include: 1) Insert a node; 2) Delete a node; and 3) Rename a node. For unlabeled trees, there is no rename operation.

Although the ordered tree edit distance can be calculated in $O(n^3)$ \cite{DBLP:journals/pvldb/PawlikA11}, the computation of unordered tree edit distance has been proved to belong to $\NP$-Complete \cite{DBLP:journals/ipl/ZhangSS92}, and even $\MaxSNP$-Hard \cite{DBLP:journals/ipl/ZhangJ94}. Therefore, we propose a novel modified tree edit distance called TED* which still satisfies the metric properties but it is also polynomially computable.

The edit operations in TED* are different from the edit operations in the original tree edit distance. In particular, in TED*, we do not allow any operation that can change the depth of any existing node in the trees. The reason is that, we view the depth of a neighbor node, which represents the closeness between the neighbor node and the root node, as an important property of this node in the neighborhood topology. Therefore, two nodes with different depths should not be matched. The definition of TED* is below:

where each edit operation $e_i$ belongs to the set of edit operations defined in Section \ref{sec: Edit Operations in TED*}. $|E|$ is the number of edit operations in $E$ and $\delta_T$ is the TED* distance proposed in this paper.

\subsection{Edit Operations in TED*}
\label{sec: Edit Operations in TED*}

In the original tree edit distance, when inserting a node between an existing node $n$ and its parent, it increases the depth of node $n$ and also increases the depths of all the descendants of node $n$. Similarly, when deleting a node which has descendants, it decreases the depths of all the descendants. Since in TED* we do not want to change the depth of any node, we should not allow these operations. Therefore, we need another set of edit operations as follows:

\begin{enumerate}[label=\Roman*:]  
\item Insert a leaf node
\item Delete a leaf node
\item Move a node at the same level
\end{enumerate}

To clarify, ``Move a node at the same level'' means changing an existing node's parent to another. The new parent node should be in the same level as the previous parent node. The above $3$ modified edit operations do not change the depth of any existing node. Also after any edit operation, the tree structure is preserved.

Figure \ref{fig:TED*_TED} shows an example of the difference between the traditional tree edit distance and our modified tree edit distance. When converting the tree $T_\alpha$ to the tree $T_\beta$,  the traditional tree edit distance requires $3$ edit operations: delete node $B$, delete node $E$ and insert node $H$. TED* requires $4$ edit operations: delete node $F$, delete node $G$, insert node $H$ and move node $E$.

Notice that, for the same pair of trees, TED* may be smaller or larger than the original tree edit distance. In Section \ref{sec:TED*, TED and GED}, we analyze the differences among TED*, the original tree edit distance and the graph edit distance in more details, where the TED* can be used to provide an upper-bound for the graph edit distance on trees.

In the following of this paper, number of edit operations is considered in TED* which means each edit operation in TED* has a unit cost. However, it is easy to extend TED* to a weighted version. In Section \ref{sec:Weighted TED*}, we introduce the weighted TED*. The weighted TED* can be proven to be a metric too and moreover, the weighted TED* can provide an upper-bound for the original tree edit distance.

\begin{Definition}
Given two trees $T_1$ and $T_2$, a series of edit operations $E=\{e_1, ...e_n \}$ is valid denoted as $E_v$, if $T_1$ can be converted into an isomorphic tree of $T_2$ by applying the edit operations in $E$. Then $\delta_T(T_1,T_2)$ $=$ $\min |E|$, $\forall E_v$.
\label{Definition:TED*}
\end{Definition}

\section{TED* Computation}
\label{sec: TED* Computation}

In this section, we introduce the algorithm to compute TED* between a pair of $k$-adjacent trees. It is easy to extend TED* to compare two generic unordered  trees. Before illustrating the algorithm, we introduce some definitions.

\begin{Definition}
Let $L_i(u)$ be the $i$-th level of the $k$-adjacent tree $T(u,k)$, where $L_i(u)$ $=$ $\{n|n \in T(u,k), d(n, u) = i \}$ and $d(n,u)$ is the depth of node $n$ in $T(u,k)$.
\label{Definition: Level}
\end{Definition}

In Definition \ref{Definition: Level}, the $i$-th level $L_i(u)$ includes the nodes with depths of $i$ in the $k$-adjacent tree $T(u,k)$. Similarly in $k$-adjacent tree $T(v,k)$, there exists the $i$-th level $L_i(v)$. The algorithm compares two $k$-adjacent trees $T(u,k)$ and $T(v,k)$ bottom-up and level by level. First, the algorithm compares and matches the two bottom levels $L_k(u)$ and $L_k(v)$. Then the next levels $L_{k-1}(u)$ and $L_{k-1}(v)$ are compared and matched and we continue like that until the root of the trees.

In our algorithm, we compare and match two levels based on the \textit{canonization labels} of nodes in the corresponding levels. The canonization label is defined as follows:

\begin{Definition}
Let $C(n)$ be the canonization label of node $n$, $C(n)$ $\in$ $\mathbb Z_{\geq 0}$. The canonization label $C(n)$ is assigned based on the subtree of node $n$. Two nodes $u$ and $v$ have the same canonization labels $C(u)$ $=$ $C(v)$, if and only if the two subtrees of nodes $u$ and $v$ are isomorphic.
\label{Definition:Canonization Label}
\end{Definition}

Although we use non-negative integer numbers to represent canonization labels in this paper, actually any set of symbols can be used for that. We use $x$ $\sqsubset$ $y$ to denote that node $x$ is a child of node $y$.

To compute TED* distance, we use two types of costs: padding cost and matching cost. Since the algorithm runs level by level,  for each level $i$, there exists a local padding cost $P_i$ and a local matching cost $M_i$.

The TED* distance represents the minimal number of modified edit operations needed to convert $T(u,k)$ to $T(v,k)$. Therefore, for each level $i$, after the comparing and matching, there exists a bijective mapping function $f_i$ from the nodes in $L_i(u)$ to the nodes in $L_i(v)$. Thus, $\forall$ $x$ $\in$ $L_i(u)$, $f_i(x)$ $=$ $y$ $\in$ $L_i(v)$ and $\forall$ $y$ $\in$ $L_i(v)$, $f_i^{-1}(y)$ $=$ $x$ $\in$ $L_i(u)$.

\begin{table}
\centering
\begin{tabular}{|c|l|} \hline
$T(u, k)$ & $k$-adjacent tree of node $u$\\ \hline
$L_i(u)$ & $i$th-level of $k$-adjacent tree of node $u$\\ \hline
$C(n)$ & Canonization label of node $n$\\ \hline
$x \sqsubset y$ & Node $x$ is a child of node $y$\\ \hline
$P_i$ & Padding cost for the level $i$ \\ \hline
$M_i$ & Matching cost for the level $i$\\ \hline
$G^2_i$ & Complete bipartite graph in the level $i$\\ \hline
$w(x,y)$ & Edge weight in $G^2_i$ between $x$ and $y$\\ \hline
$m(G^2_i)$ & Minimal cost for $G^2_i$ matching\\ \hline
$f_i: f_i(x) = y$ & Node mapping function for $G^2_i$ matching\\ \hline
\end{tabular}
\caption{Notation Summarization for TED* Algorithm}
\label{table: Notation Summarization for TED* Algorithm}
\end{table}

The notations used in the algorithm are listed in Table \ref{table: Notation Summarization for TED* Algorithm}.

\subsection{Algorithmic Overview}
\label{Algorithm Overview}

In this section, we present an overview of  the TED* computation in Algorithm \ref{alg:TED*_algorithm}. The input of the algorithm are two $k$-adjacent trees.

The TED* algorithm is executed bottom-up and level by level. For each level, there exists a padding cost and a matching cost. The TED* distance is the summation of padding  and matching costs from all levels.

Actually, as we explain in the following sections, there exists one-to-one mapping from padding and matching in the algorithm to the edit operations defined in Section \ref{sec: Edit Operations in TED*}. The padding cost represents the number of edit operations of 1) Inserting a leaf node and 2) Deleting a leaf node, and the matching cost is number of edit operations of 3) Moving a node at the same level.

To compute the padding  and matching costs, we use $6$ steps in each level: node padding (line 2-6 in Algorithm \ref{alg:TED*_algorithm}), node canonization (line 7-8), complete weighted bipartite graph construction (line 9-13),  weighted bipartite graph matching (line 14), matching cost calculation (line 15) and node re-canonization (line 16-19). Next, we describe those $6$ steps in details.

\subsection{Node Padding}
\label{Node Padding}

Node padding includes two parts:  cost calculation and node padding. The padding cost is the size difference between two corresponding levels. Let $L_i(u)$ and $L_i(v)$ be the corresponding levels. The difference between the number of nodes in $L_i(u)$ and the number of nodes in $L_i(v)$ is the padding cost:
\begin{equation}
P_i = \big||L_i(u)| - |L_i(v)|\big|
\label{Equation:Padding Cost}
\end{equation}

Actually the padding cost represents the number of edit operations of inserting a leaf node or deleting a leaf node. Assume two levels $L_i(u)$ and $L_i(v)$ of two $k$-adjacent trees $T(u, k)$ and $T(v, k)$ that have different number of nodes. Without loss of generality, let's suppose that $|L_i(u)|$ $<$ $|L_i(v)|$. Then when transforming the tree $T(u, k)$ to the tree $T(v, k)$, there must be several ``inserting a leaf node'' edit operations conducted on level $L_i(u)$. Symmetrically, if transforming the tree $T(v, k)$ to the tree $T(u, k)$, there must be several ``deleting a leaf node'' edit operations conducted on level $L_i(v)$. There is no other edit operation that can change the number of nodes in level $L_i(u)$ and level $L_i(v)$. In the node padding step, we always pad leaf nodes to the level which has less nodes.

\begin{algorithm}
\KwIn{Tree $T(u, k)$ and Tree $T(v, k)$ }
\KwOut{$\delta_T(T(u, k), T(v, k))$}
\For{$i \leftarrow k$ \KwTo $1$}{
   Calculate padding cost: $P_i$ $=$ $\big||L_i(u)|$ - $|L_i(v)|\big|$\;
   \uIf{$|L_i(u)|$ $<$ $|L_i(v)|$}{Pad $P_i$ nodes to $L_i(u)$\;}
   \ElseIf{$|L_i(u)|$ $>$ $|L_i(v)|$}{Pad $P_i$ nodes to $L_i(v)$\;}
   \ForEach{$n$ $\in$ $L_i(u)$ $\cup$ $L_i(v)$}{
       Get node canonization: $C(n)$\; 
   }
   \ForEach{$(x,y)$, where $x$ $\in$ $L_i(u)$ $\&$ $y$ $\in$ $L_i(v)$}{
       Get collection $S(x)$ $=$ $($ $C(x')$ $|$ $\forall x' \sqsubset x$ $)$\;
       Get collection $S(y)$ $=$ $($ $C(y')$ $|$ $\forall y' \sqsubset y$ $)$\;
       $w(x,y)$ $=$ $|S(x) \setminus S(y)|$ $+$ $|S(y) \setminus S(x)|$\; 
   }
   Construct bipartite graph $G^2_i$ with weights $w(x,y)$\; 
   Get cost $m(G^2_i)$ for minimal matching of $G^2_i$\;
   Calculate matching cost $M_i$ $=$ $(m(G^2_i)$ $-$ $P_{i+1})$ $/$ $2$\;
   \eIf{$|L_i(u)|$ $<$ $|L_i(v)|$}
   {\lForEach{$x$ $\in$ $L_i(u)$}{$C(x)$ $=$ $C(f_i(x))$}}
   {\lForEach{$y$ $\in$ $L_i(v)$}{$C(y)$ $=$ $C(f_i^{-1}(y))$}}
}
Return $\delta_T(T(u, k), T(v, k))$ $=$ $\sum_{i=1}^{k}(P_i + M_i)$\;
\caption{Algorithm for TED* Computation}
\label{alg:TED*_algorithm}
\end{algorithm}

\subsection{Node Canonization}
\label{Node Canonization}

\begin{figure}
\centering
\includegraphics[scale=0.7]{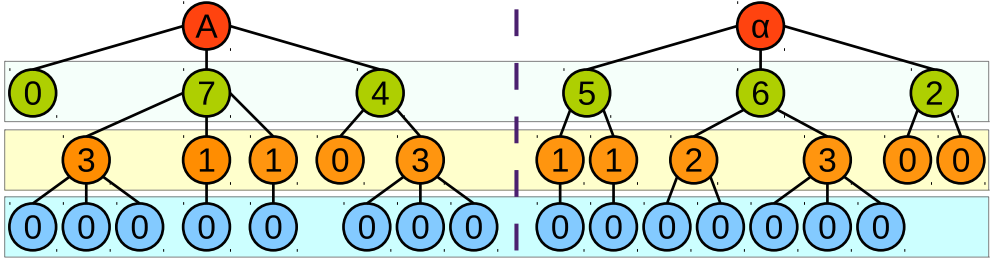}
\caption{Node Canonization}
\label{fig:Canonization}
\end{figure}

After node padding, we assign the canonization labels to all the nodes in the corresponding levels $L_i(u)$ and $L_i(v)$. Namely, $\forall$ $n$ $\in$ $L_i(u)$ $\cup$ $L_i(v)$, we assign the canonization label $C(n)$ to node $n$.

Based on Definition \ref{Definition:Canonization Label}, two nodes $x$ and $y$ have the same canonization labels $C(x)$ $=$ $C(y)$, if and only if the two subtrees of nodes $x$ and $y$ are isomorphic. However, we do not need to check the full subtrees of two nodes to decide whether those two nodes should be assigned to the same canonization label or not. We can use the children's canonization labels to decide whether two nodes have the same subtrees or not. Let $S(x)$ be the collection of the canonization labels of all the children of $x$, i.e. 
\begin{Definition}
$S(x) = (C(x'_1)...C(x'_{|x|}))$, where $x'_i \sqsubset x$ for $1\le i \le |x|$ and $|x|$ is the total number of node $x$'s children.
\end{Definition}

The collection of the canonization labels may maintain duplicate labels, since two children may have the same canonization labels. Also the canonization labels in a collection can be lexicographically ordered. Therefore, we have the following Lemma \ref{Lemma: Canonization Checking}.

\begin{Lemma}
$C(x)$ $=$ $C(y)$ iff $S(x)$ $\equiv$ $S(y)$.
\label{Lemma: Canonization Checking}
\end{Lemma}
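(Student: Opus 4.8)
The plan is to prove the two implications separately, in each case passing between canonization labels and subtree isomorphism via Definition~\ref{Definition:Canonization Label}, and reducing an isomorphism of the two rooted subtrees to a label-preserving bijection of their children. The one structural fact I would isolate first is a decomposition of rooted-tree isomorphism: the subtree rooted at $x$ is isomorphic to the subtree rooted at $y$ if and only if there is a bijection $\sigma$ between the children of $x$ and the children of $y$ such that, for every child $x' \sqsubset x$, the subtree rooted at $x'$ is isomorphic to the subtree rooted at $\sigma(x')$. This is immediate from the definition of an isomorphism of rooted trees: the root must map to the root, so the restriction to each child subtree is itself an isomorphism, and conversely a family of child isomorphisms glues together with the root map $x \mapsto y$ into a global one.

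For the forward direction, I would assume $C(x) = C(y)$. By Definition~\ref{Definition:Canonization Label} the subtrees of $x$ and $y$ are isomorphic, so the decomposition above yields a bijection $\sigma$ matching each child $x' \sqsubset x$ with a child $\sigma(x') \sqsubset y$ whose subtree is isomorphic to that of $x'$. Applying Definition~\ref{Definition:Canonization Label} once more, this time to the children, gives $C(x') = C(\sigma(x'))$ for every child. Since $\sigma$ is a bijection, it carries the collection $S(x)$ of children's labels onto $S(y)$ while preserving multiplicities; hence $S(x) \equiv S(y)$.

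For the backward direction, I would assume $S(x) \equiv S(y)$. Equal collections (equivalently, equal lexicographically sorted label sequences) provide a bijection $\sigma$ between the children of $x$ and of $y$ with $C(x') = C(\sigma(x'))$ for each $x' \sqsubset x$. By Definition~\ref{Definition:Canonization Label} applied to the children, each subtree of $x'$ is isomorphic to the subtree of $\sigma(x')$; the decomposition fact then assembles these child isomorphisms, together with $x \mapsto y$, into an isomorphism of the full subtrees of $x$ and $y$. A final appeal to Definition~\ref{Definition:Canonization Label}, now for $x$ and $y$, gives $C(x) = C(y)$.

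The main obstacle is making the two inner applications of Definition~\ref{Definition:Canonization Label}, at the level of the children, fully rigorous: that definition specifies $C$ only as a labeling consistent with subtree isomorphism, and the algorithm realizes it level by level, so I must guarantee that the children's labels already encode isomorphism of the child subtrees before invoking the definition on them. I would therefore phrase the whole argument as an induction on the height of the subtrees, with the height-zero (leaf) case as the base and the decomposition step as the inductive step, so that the inductive hypothesis supplies exactly the equivalence \enquote*{equal child labels iff isomorphic child subtrees.} The only other point needing care is the bookkeeping of multiplicities: because two children may carry the same label, $S(x)$ must be treated as a multiset and $\sigma$ must respect multiplicities, which is precisely what the lexicographic ordering of the collection ensures.
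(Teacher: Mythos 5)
Your proof is correct and follows essentially the same route as the paper's: the paper's (one-sentence) argument is exactly the observation that subtree isomorphism decomposes into a bijection of children with isomorphic child subtrees, which via Definition~\ref{Definition:Canonization Label} is equivalent to equality of the label multisets. You simply spell out both directions and add the induction-on-height scaffolding that the paper leaves implicit.
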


Note that the equivalence $\equiv$ denotes that two collections $S(x)$ and $S(y)$ contain exactly the same elements. For example, $S(x) = (0,0,1)$ and $S(y)=(0,1,0)$ are equivalent. The proof is straightforward, since if the subtrees of two nodes are isomorphic, two nodes must have the same number of children and the corresponding children must have  isomorphic subtrees.

\begin{algorithm}
\KwIn{Two levels $L_i(u)$ and $L_i(v)$}
\KwOut{$C(n)$, $\forall$ $n$ $\in$ $L_i(u)$ $\cup$ $L_i(v)$}
Queue $q$ is lexicographically ordered\;
\ForEach{$n$ $\in$ $L_i(u)$ $\cup$ $L_i(v)$}{
    Get collection $S(n)$ $=$ $($ $C(n')$ $|$ $\forall n' \sqsubset n$ $)$\;
    $q$ $\leftarrow$ $S(n)$
}
Pop the first element in $q$ as $q_0 = S(x)$\;
$C(x)$ $=$ $0$\;
\For{$i \leftarrow 1$ \KwTo $|L_i(u)|+|L_i(v)|-1$}{
	\eIf{$q_i = S(y)$ $\equiv$ $q_{i-1} = S(x)$}{
		$C(y)$ $=$ $C(x)$
	}{
		$C(y)$ $=$ $C(x)$ $+$ $1$
	}
}
\caption{Node Canonization}
\label{alg:Canonization}
\end{algorithm}

To avoid checking all pairs of nodes when assigning canonization labels, we lexicographically order the collections of children's canonization labels. First, the collections are ordered based on their size, in ascending order. Second, in each collection, the children's canonization labels are ordered as well. For example, assume we have $3$ collections: $(0, 0)$, $(0, 1)$ and $(2)$. Then those $3$ collections should be ordered as: $(2)$ $<$ $(0,0)$ $<$ $(0,1)$. By using lexicographical order, we can assign the canonization labels in $O(n\log n)$ rather than $O(n^2)$. Algorithm \ref{alg:Canonization} illustrates the details of node canonization.

Figure \ref{fig:Canonization} shows an example of node canonization level by level. At the bottom level ($4$th level) all the nodes have the same canonization label $0$ since they are all leaf nodes without any subtree. When we proceed to the next level ($3$rd level), the nodes in the $3$rd level have different collections of children's canonization labels: $(0)$, $(0,0)$ and $(0,0,0)$. Then we could assign the node with collection of $(0)$ a canonization label $1$. Similarly, the canonization label of node with collection of $(0,0)$ is $2$. We continue like that until we reach the root.

Such node canonization process guarantees that the nodes with the same canonization label in the same level must have isomorphic subtrees. However, the canonization labels of two nodes in different levels do not need to satisfy isomorphism, because we never compare two nodes in different levels. For each level, after the bipartite graph matching process, the canonization labels will be re-assigned. Therefore, for the next level, the node canonization process can be only based on the instant children's canonization re-labels rather than the original labels. 

\subsection{Bipartite Graph Construction}
\label{Bipartite Graph Construction}
To calculate the matching cost, we need to construct a complete weighted bipartite graph and compute the minimum bipartite graph matching.

The weighted bipartite graph $G^2_i$ is a virtual graph. The two node sets of the bipartite graph are the corresponding levels from two $k$-adjacent trees, namely the nodes in  level $L_i(u)$ and level $L_i(v)$. The bipartite graph construction is done after the node padding. Therefore the two node sets $L_i(u)$ and $L_i(v)$ must have the same number of nodes.

$G^2_i$ is a complete weighted bipartite graph which means that for every node pair $(x,y)$ where $x$ $\in$ $L_i(u)$ and $y$ $\in$ $L_i(v)$, there exists a virtual weighted edge. The key component of the bipartite graph construction is to assign the weights to all virtual edges.

\begin{algorithm}
\KwIn{Two levels $L_i(u)$ and $L_i(v)$}
\KwOut{Bipartite graph $G^2_i$}
\ForEach{$(x,y)$, where $x$ $\in$ $L_i(u)$ $\&$ $y$ $\in$ $L_i(v)$}{
       Get collection $S(x)$ $=$ $($ $C(x')$ $|$ $\forall x' \sqsubset x$ $)$\;
       \lIf{$x$ is a padding node}{$S(x)$ $=$ $\emptyset$}
       Get collection $S(y)$ $=$ $($ $C(y')$ $|$ $\forall y' \sqsubset y$ $)$\;
       \lIf{$y$ is a padding node}{$S(y)$ $=$ $\emptyset$}
       $w(x,y)$ $=$ $|S(x) \setminus S(y)|$ $+$ $|S(y) \setminus S(x)|$\;
       \qquad \qquad ($``\setminus"$ is collection difference)\\
       $G^2_i$ $\leftarrow$ $w(x,y)$\;
}

\caption{Bipartite Graph Construction}
\label{alg:Bipartite Graph Construction}
\end{algorithm}

In the complete bipartite graph $G^2_i$, the weight of each edge is decided by the children of two nodes that the edge connects. For two nodes $x$ and $y$, the children of two nodes can be represented as two canonization label collections $S(x)$ and $S(y)$. We denote $S(x)\setminus S(y)$ be the difference between collections $S(x)$ and $S(y)$. The weight $w(x,y)$ is the size of the symmetric difference between the collections $S(x)$ and $S(y)$, i.e. $w(x, y)$ $=$ $|S(x) \setminus S(y)|$ $+$ $|S(y) \setminus S(x)|$. For example, assume node $x$ has the collection $S(x)$ $=$ $(0,0,1)$ and the node $y$ has the collection $S(y)$ $=$ $(0,2)$. The weight between two nodes should be $3$, since node $x$ does not have a child with label $2$, whereas, node $y$ does not have a child with label $1$ and has only one child with label $0$. Notice that, the children canonization label collections allow duplicates, so the difference considers the number of duplicates. If a node is padded, the children canonization label collections is empty as $\emptyset$, since there is no child of a padding node. Algorithm~\ref{alg:Bipartite Graph Construction} shows how to construct the bipartite graph in detail. 

Figure~\ref{fig:Matching} gives an example of constructing the complete weighted bipartite graph. In the figure, let $L_i(u)$ has nodes $A$, $B$ and $C$, while $L_i(v)$ has nodes $X$, $Y$ and $Z$. To avoid confusions, we use Greek characters to represent the canonization labels and integer values to represent the weights. As shown in Figure \ref{fig:Matching}, the weight between node $A$ and $X$ should be $1$ because there is only one child $\beta$ which is the child of $A$ but not the child of $X$.

Since the overall algorithm goes from the bottom level to the root level, when constructing the bipartite graph $G^2_i$ by using $L_i(u)$ and $L_i(v)$, the level $L_{i+1}(u)$ and level $L_{i+1}(v)$ have already been matched which means the canonization labels of nodes in $L_{i+1}(u)$ should be the same as the canonization labels of nodes in $L_{i+1}(v)$.

Moreover, some nodes in  level $L_{i+1}(u)$ or level $L_{i+1}(v)$ may not connect to any node in the level $L_i(u)$ or level $L_i(v)$ respectively. For example, as shown in Figure~\ref{fig:Matching}, the node with canonization label of $\epsilon$ in the left side does not connect to any node of $A$, $B$ or $C$. This is due to the fact that  node $\epsilon$ is a padded node. The padded nodes are not connected to any parent to avoid replicated cost.

The weight between a pair of nodes indicates the number of ``moving a node at the same level" edit operations needed. By calculating the minimum matching, we can get the minimum number of moving edit operations.

\begin{figure}
\centering
\includegraphics[scale=0.7]{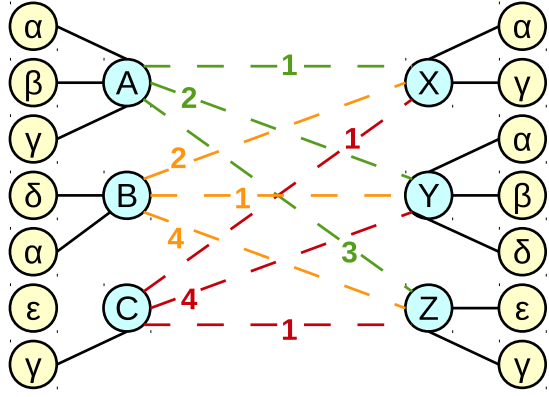}
\caption{Complete Weighted Bipartite Graph}
\label{fig:Matching}
\end{figure}

\subsection{Bipartite Graph Matching}
\label{Bipartite Graph Matching}

After constructing a complete weighted bipartite graph $G^2_i$, we run a minimal bipartite graph matching on $G^2_i$.

The minimal bipartite graph matching is to find a bijective mapping function $f_i$ for $G^2_i$, where $f_i(x)$ $=$ $y$, $x$ $\in$ $L_i(u)$ and $y$ $\in$ $L_i(v)$. The bijective mapping function is to minimize the summation of weights from all nodes in $L_i(u)$ to the corresponding nodes in $L_i(v)$. Let $m(G^2_i)$ be the minimal cost of the matching.
Then,  we have:
\begin{equation}
f_i: \quad m(G^2_i) = Min \sum_{\forall x \in L_i(u)} w(x, f_i(x))
\end{equation}

In this paper, we use the Hungarian algorithm to solve the matching problem. In the bipartite graph matching process, we get the bijective mapping function $f_i$ and the minimal cost $m(G^2_i)$ which we will use in the next step: matching cost calculation. Notice that, the minimal cost for bipartite graph matching is not the matching cost in TED* algorithm.

\subsection{Matching Cost Calculation}
\label{Matching Cost Calculation}

In this section, we show how to calculate the matching cost $M_i$. From the previous section, we get the minimal cost for the bipartite graph matching as $m(G^2_i)$. Let the padding cost from the previous pair of levels $L_{i+1}(u)$ and $L_{i+1}(v)$ be $P_{i+1}$. Then the matching cost $M_i$ can be calculated as:
\begin{equation}
M_i = ( m(G^2_i) - P_{i+1} ) / 2\;
\label{equation:matching cost}
\end{equation}

The matching cost $M_i$ represents the minimal number of ``moving a node at the same level" edit operations needed to convert level $L_i(u)$ to level $L_i(v)$. We consider the following two situations: 1) there is no padding node in level $L_{i+1}(u)$ and level $L_{i+1}(v)$; 2) there exist padding nodes in level $L_{i+1}(u)$ or level $L_{i+1}(v)$;

If there is no padding node in level $L_{i+1}(u)$ and level $L_{i+1}(v)$, every node in $L_{i+1}(u)$ must have a parent in $L_i(u)$. Similarly, every node in $L_{i+1}(v)$ must have a parent in $L_i(v)$. Because the canonization label collections of $L_{i+1}(u)$ and $L_{i+1}(v)$ should be equivalent, then for any node with canonization label $n$ in $L_{i+1}(u)$, there must be a node with canonization label $n$ in $L_{i+1}(v)$. Let $C(n_u) = n$, where $n_u$ $\in$ $L_{i+1}(u)$ and $C(n_v) = n$, where $n_v$ $\in$ $L_{i+1}(v)$. Assume $n_u$ is the child of node $x$ $\in$ $L_{i}(u)$ and $n_v$ is the child of node $y$ $\in$ $L_{i}(v)$. If $f_i(x) = y$, where $f_i$ is the bijective mapping function in the matching, then node $n_u$ and node $n_v$ will not generate any disagreement cost in bipartite graph matching. Otherwise, the pair of nodes $n_u$ and $n_v$ will cost $2$ in the bipartite graph matching, since $x$ is matching to some other node other than $y$ and $y$ is also matching to some other node. However, only one ``moving a node at the same level" edit operation is needed to correct the disagreement, for example, move node $n_v$ from $y$ to $f_i(x)$. Thus, the matching cost should be equal to $m(G^2_i)/2$.

If there are padding nodes in levels $L_{i+1}(u)$ or  $L_{i+1}(v)$, then we can always pad the nodes to the optimal parent which will not generate any matching cost. When we construct the complete bipartite graph $G^2_i$, one padding node in level $L_{i+1}(u)$ or level $L_{i+1}(v)$ will generate $1$ disagreement cost because the padding node is not connected to any node in $L_{i}(u)$ or $L_{i}(v)$, whereas the matched node must connect some node in $L_{i}(v)$ or $L_{i}(u)$. Therefore, the number of ``moving a node at the same level" operations should be $(m(G^2_i)-P_{i+1})/2$.

\subsection{Node Re-Canonization}
\label{Node Re-Canonization} 

The last step for each level is node re-canonization. This step ensures that for each level, only the children information is needed to perform all $6$ steps. Let $f_i$ be the bijective matching mapping function from $L_i(u)$ to $L_i(v)$. Then, $\forall$ $x$ $\in$ $L_i(u)$, $f_i(x) = y$ $\in$ $L_i(v)$. Symmetrically, $\forall$ $y$ $\in$ $L_i(v)$, $f_i^{-1}(y) = x$ $\in$ $L_i(u)$.

In node re-canonization, without loss of generality, we always transform the level with the smaller size to the level with the larger size, since we always do the padding when we calculate the inserting and deleting edit operations. 
Therefore, we have the node re-canonization criteria as follow:
\begin{itemize}
\item If $|L_i(u)|$ $<$ $|L_i(v)|$, $\forall$ $x$ $\in$ $L_i(u)$, $C(x)$ $\Leftarrow$ $C(f_i(x))$
\item If $|L_i(u)|$ $\leq$ $|L_i(v)|$, $\forall$ $y$ $\in$ $L_i(v)$, $C(y)$ $\Leftarrow$ $C(f_i^{-1}(y))$
\end{itemize}

The above node re-canonization criteria means that all the nodes in the level $L_i(u)$ are re-canonized only if the level $L_i(u)$ has padding nodes. The new canonization labels of nodes in $L_i(u)$ are the same as the canonization labels of the matched nodes in $L_i(v)$. Similar re-canonization is needed if the level $L_i(v)$ has padding nodes.

After the node re-canonization, the canonization labels of nodes in level $L_i(u)$ are the same as the canonization labels of nodes in level $L_i(v)$. Then we can proceed to the next pair of levels: $L_{i-1}(u)$ and $L_{i-1}(v)$.

\section{Correctness Proof}
\label{sec: Correctness Proof}

\begin{Lemma}
The Algorithm \ref{alg:TED*_algorithm} correctly returns the TED* distance defined in Definition \ref{Definition:TED*}.
\label{Lemma:Correctness}
\end{Lemma}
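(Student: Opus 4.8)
The plan is to prove the two inequalities $\delta_T(T(u,k),T(v,k)) \le \sum_{i=1}^k (P_i + M_i)$ and $\delta_T(T(u,k),T(v,k)) \ge \sum_{i=1}^k (P_i + M_i)$ separately, where the left-hand side is the optimum of Definition~\ref{Definition:TED*} taken over all valid edit sequences $E_v$ and the right-hand side is exactly the value Algorithm~\ref{alg:TED*_algorithm} returns. The first inequality (soundness) amounts to exhibiting one valid edit sequence whose length equals the returned value; the second (optimality) amounts to showing that every valid sequence is at least that long. Establishing both forces the minimum to coincide with the algorithm's output.

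For soundness I would turn the intermediate data the algorithm produces directly into an explicit edit script. At each level $i$, the $P_i$ padded leaves are realized as $P_i$ applications of ``insert a leaf'' (or ``delete a leaf''), and the bijection $f_i$ obtained from the minimal matching of $G^2_i$ is realized as the moves that reassign each level-$(i+1)$ node to the parent dictated by $f_i$. I would then prove by downward induction on $i$ (from the bottom level $k$ up to the root) that, once the operations associated with levels $k,\dots,i$ have been applied, the forest hanging below level $i-1$ is isomorphic to the corresponding forest of $T(v,k)$. The inductive step rests on Lemma~\ref{Lemma: Canonization Checking}: equal canonization labels certify isomorphic subtrees, so zero-weight matched pairs require no move, and the re-canonization step is precisely what keeps this invariant expressible in terms of children's labels alone, so it must be shown to preserve the invariant. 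Counting the operations of this script yields $\sum_i(P_i+M_i)$, giving the upper bound.

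For optimality I would argue, level by level, that any valid sequence must contain at least $P_i$ insert/delete operations and at least $M_i$ moves attributable to level $i$. The size argument is clean: since a move never changes a node's depth and leaf insertion/deletion is the only operation that alters the cardinality of a level, the net number of leaves inserted minus deleted at level $i$ is forced to equal $|L_i(v)|-|L_i(u)|$, so at least $|\,|L_i(u)|-|L_i(v)|\,|=P_i$ such operations are unavoidable. For the moves, I would show that the multiset of parent assignments of the level-$(i+1)$ nodes must be transformed, by moves alone, into a configuration isomorphic to that of $T(v,k)$, and that the minimum number of such moves is exactly $m(G^2_i)/2$ in the unpadded case, since each misattached node is counted once at the source parent and once at the target parent of the matching while a single move repairs it; the $-P_{i+1}$ correction accounts for padded parents at level $i+1$ that contribute only one unit to $m(G^2_i)$ because they are disconnected from the level above.

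The main obstacle I expect is this separability-plus-minimality argument for the moves: justifying that the greedy bottom-up matching with re-canonization is \emph{globally} optimal rather than merely locally optimal at each level, and pinning down the exact correspondence behind $M_i=(m(G^2_i)-P_{i+1})/2$. I would address it with an exchange argument: take any optimal valid sequence, reorder it into canonical bottom-up form without increasing its length, and then show that its restriction to a single level induces a feasible assignment on $G^2_i$ whose cost is at least $m(G^2_i)$, so that fewer than $M_i$ moves is impossible. Proving that the induced assignment is a genuine bijective matching, that operations acting between different pairs of consecutive levels do not interfere, and that re-canonization faithfully propagates the matched structure upward without loss, is the delicate part of the whole argument.
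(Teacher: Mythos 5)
Your proposal takes essentially the same route as the paper's proof: group any optimal edit sequence by level (legitimate because no TED* operation changes a node's depth), identify the padding cost $P_i$ with the forced leaf insertions/deletions at level $i$, and recover the move count from the bipartite matching cost via the observation that an unmatched parent assignment contributes $2$ to $m(G^2_i)$ but is repaired by one move, while a padded level-$(i+1)$ node contributes only $1$, giving $M_i=(m(G^2_i)-P_{i+1})/2$. Your version is in fact more careful than the paper's --- the paper's proof is a short sketch that asserts level-wise optimality without explicitly constructing the witnessing edit script or proving the lower bound, so the ``delicate parts'' you flag (the exchange argument, the global rather than merely per-level optimality of the greedy bottom-up matching, and the role of re-canonization in propagating matched structure upward) are precisely the steps the paper leaves unaddressed.
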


\begin{proof}
First, the edit operations defined for TED* cannot change the depth of any existing node. Then, let the series of edit operations with minimal number of edit operations be $E_min = \{ e_1, e_2, ... e_n$. Then each edit operation $e_i$ $\in$ $E$ can be renamed as $e_j^k$, where $k$ is the level of the node in the edit operation and $j$ marks the order of edit operations in level $k$. Therefore, the edit operations in $E_min$ can be grouped by each level.

Second, we prove that the level by level algorithm can return the minimal edit operations each time. In the algorithm, we always pad new nodes to the tree with less nodes.The number of inserting and deleting leaf nodes represents the padding cost in the algorithm. If there exists a ``moving a node" edit operation in one level, it means before moving a node, the bipartite graph matching should count mismatch twice if the mismatched node is not padded. If the mismatched node is padded, when we calculate the weights in the complete bipartite graph, the padded node is not attached to any parent and the bipartite graph matching only counts mismatch once. Therefore, by using Equation \ref{equation:matching cost}, the number of ``moving a node" edit operations can be calculated by using the matching cost.
\end{proof}

\section{Metric Proof}
\label{sec: Metric Proof}

The TED* $\delta_T$ is a metric, if and only if it satisfies $4$ metric properties: non-negativity, symmetry, identity and triangular inequality. Namely, for any  $k$-adjacent trees trees $T(x, k)$, $T(y, k)$ and $T(z, k)$ the following holds:

\begin{tabular}{lr}
$[1]$ $\delta_T(T(x, k),T(y, k))$ $\geq$ $0$ $\qquad$ $\qquad$ & (Non-negativity)\\
$[2]$ $\delta_T(T(x, k),T(y, k))$ $=$ $\delta_T(T(y, k),T(x, k))$ & (Symmetry)\\
$[3]$ $\delta_T(T(x, k),T(y, k))$ $=$ $0$, iff $T(x, k)$ $\simeq$ $T(y, k)$ & (Identity)\\
$[4]$ $\delta_T(T(x, k),T(z, k))$ $\leq$ $\delta_T(T(x, k),T(y, k))$ $+$ $\delta_T(T(y, k),T(z, k))$ & (Triangular Inequality)\\
\end{tabular}

For the identity property, the distance is $0$, if and only if two trees are isomorphic and the roots are mapped by the bijective node mapping function.

In the following sections, we give the detailed proofs for non-negativity, symmetry, identity and triangular inequality. Before the proofs, we rewrite the formula to calculate TED* distance as follows:
\begin{equation}
\delta_T = \sum_{i=1}^{k}(P_i + M_i) = \sum_{i=2}^{k}P_i/2 + \sum_{i=1}^{k-1}m(G^2_i)/2
\label{equation:TED* Calculation}
\end{equation}

The above formula is derived from Algorithm \ref{alg:TED*_algorithm}, where $P_i$ is the padding cost at  level $i$ and  $M_i$ is the matching cost at the level $i$.  We have seen that the matching cost is $M_i$ $=$ $(m(G^2_i)$ $-$ $P_{i+1})$ $/$ $2$. For the root level, there is no padding cost. So $P_1$ $=$ $0$. While for the bottom level, there is no matching cost. So $M_k$ $=$ $0$. Therefore, we get the formula as in Equation (\ref{equation:TED* Calculation}). We will use the equation in the following proofs.

\subsection{Identity}
\label{sec: Identity}

By definition, it is straightforward that the inter-graph node similarity satisfies both non-negativity and symmetry. Since the number of edit operations must be non-negative and both the padding cost and matching cost are non-negative, then TED* must be non-negative. Because all edit operations in TED* can be reverted by  symmetric operations, then TED* is symmetric.

In the following part, we prove that the distance satisfies the identity property as well, where  $\delta_T$ is the TED*.

\begin{Lemma}
$\delta_T(T(x, k),T(y, k))$ $=$ $0$, iff $T(x, k)$ $\simeq$ $T(y, k)$
\label{Lemma:Identity}
\end{Lemma}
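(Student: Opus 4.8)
The plan is to prove the biconditional in two directions, exploiting the level-by-level structure of Algorithm \ref{alg:TED*_algorithm} together with the formula
\begin{equation*}
\delta_T = \sum_{i=2}^{k}P_i/2 + \sum_{i=1}^{k-1}m(G^2_i)/2.
\end{equation*}
Since every term in this sum is non-negative (padding costs are absolute differences of cardinalities, and matching costs $m(G^2_i)$ are sums of symmetric-difference weights $w(x,y)\ge 0$), the distance vanishes \emph{if and only if} every summand vanishes. So the whole proof reduces to showing that the simultaneous conditions $P_i=0$ for all $i$ and $m(G^2_i)=0$ for all $i$ are equivalent to $T(x,k)\simeq T(y,k)$.

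First I would prove the easy direction: if $T(x,k)\simeq T(y,k)$ (with roots mapped to each other), then a root-preserving isomorphism restricts to a level-preserving bijection on each $L_i$, so $|L_i(x)|=|L_i(y)|$ and every $P_i=0$; moreover no padding nodes are ever introduced, and the isomorphism induces a perfect matching in each $G^2_i$ under which matched nodes have identical children-label collections, giving $w(x,f_i(x))=0$ and hence $m(G^2_i)=0$. This uses Lemma \ref{Lemma: Canonization Checking} to guarantee that isomorphic subtrees receive equal canonization labels, so the symmetric-difference weights between matched nodes are genuinely zero. Thus $\delta_T=0$.

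For the converse I would assume $\delta_T=0$, conclude $P_i=0$ and $m(G^2_i)=0$ for all relevant $i$, and reconstruct an isomorphism \emph{bottom-up}. The base case is the bottom level $L_k$: $P_k=0$ forces equal cardinalities and, all nodes being leaves with label $0$, any bijection is label-preserving. For the inductive step, I would argue that $m(G^2_i)=0$ means the optimal matching $f_i$ pairs each $x\in L_i(x)$ with $f_i(x)\in L_i(y)$ so that $S(x)$ and $S(f_i(x))$ are equivalent collections; by Lemma \ref{Lemma: Canonization Checking} this means $C(x)=C(f_i(x))$, i.e.\ the subtrees rooted at $x$ and $f_i(x)$ are isomorphic. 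Composing the level maps $f_k,\dots,f_1$ coherently yields a single root-to-root isomorphism of the whole trees. The main obstacle will be handling the bookkeeping of the node re-canonization step: after matching, labels in $L_i$ are overwritten by the matched partner's labels, so I must verify that the recursion remains consistent — that the children-collection equivalence detected at level $i$ genuinely certifies subtree isomorphism of the \emph{original} trees and not merely of relabeled stand-ins, and that no padding node has crept in (which would break the bijection). Establishing that $\delta_T=0$ forces zero padding at every level, so that the matchings compose into an honest tree isomorphism rather than an isomorphism-up-to-inserted-leaves, is the crux of the argument.
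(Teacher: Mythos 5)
Your proposal is correct, and for the direction ``isomorphic $\Rightarrow$ distance $0$'' it coincides with the paper's argument (a root-preserving isomorphism forces every padding cost and every matching cost to vanish). For the converse, however, you take a genuinely different route. The paper dispatches $\delta_T=0\Rightarrow T(x,k)\simeq T(y,k)$ in one sentence by appealing to Definition~\ref{Definition:TED*}: if the minimum number of edit operations is zero, the empty edit sequence already witnesses isomorphism. That is immediate, but it implicitly leans on the correctness result of Section~\ref{sec: Correctness Proof} identifying the algorithm's output with that minimum. You instead stay entirely inside the algorithm: since the output is a sum of non-negative padding and matching terms, it vanishes iff every $P_i$ and every $m(G^2_i)$ does, and you reconstruct the isomorphism bottom-up from the zero-weight matchings via Lemma~\ref{Lemma: Canonization Checking} (zero symmetric difference of children-label collections implies equal canonization labels, which by Definition~\ref{Definition:Canonization Label} implies isomorphic subtrees), while checking that zero padding makes re-canonization a no-op so the labels certify isomorphism of the original trees rather than of relabeled stand-ins. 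This costs more work but buys independence from the algorithm--definition equivalence: it proves the identity property of the quantity the algorithm actually computes. One presentational caveat: the level matchings $f_k,\dots,f_1$ need not literally compose into a tree isomorphism, since $f_{i+1}$ need not send a child of $x$ to a child of $f_i(x)$; what your argument really yields is that matched nodes have isomorphic subtrees, after which the root-to-root isomorphism is assembled by a separate top-down recursion that pairs children with equal labels. You flag this consistency check yourself as the crux, so it is a matter of writing it out carefully rather than a gap.
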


\begin{proof}
If the TED* $\delta_T(T(x, k),T(y, k))$ between two trees is $0$, there is no edit operation needed to convert $T(x, k)$ to an isomorphic tree of $T(y, k)$. Then the two trees $T(x, k)$ and $T(y, k)$ are isomorphic.

If two $k$-adjacent trees $T(x, k)$ and $T(y, k)$ are isomorphic, there exists a bijective mapping function $f$ from all nodes in tree $T(x, k)$ to the nodes in $T(y, k)$. Then, in each level, number of nodes from two trees should be the same. Then the padding cost is $0$ for each level. Also in each level, the bijective mapping function $f$ makes the bipartite graph matching to return $0$. Therefore the matching cost is $0$. Thus, for a pair of isomorphic trees, the TED* must be $0$.
\end{proof}

\subsection{Triangular Inequality}
\label{sec: Triangular Inequality}

In this section, we prove that the TED* satisfies the triangular inequality. Let $x$, $y$ and $z$ be three trees for short. According to Equation (\ref{equation:TED* Calculation}), we have
\begin{equation}
  \left\{
 	\begin{aligned}
		\delta_T(x,y) = \sum_{i=2}^{k}P_i^{xy}/2 + \sum_{i=1}^{k-1}m(G^2_i)^{xy}/2\\
		\delta_T(y,z) = \sum_{i=2}^{k}P_i^{yz}/2 + \sum_{i=1}^{k-1}m(G^2_i)^{yz}/2\\
		\delta_T(x,z) = \sum_{i=2}^{k}P_i^{xz}/2 + \sum_{i=1}^{k-1}m(G^2_i)^{xz}/2\\
	\end{aligned}
  \right.
\end{equation}

In order to prove $\delta_T(x,z)$ $\leq$ $\delta_T(x,y)$ $+$ $\delta_T(y,z)$,  we can prove that the following two inequalities hold for each level $i$:
\begin{align}
P_i^{xz} & \leq P_i^{xy} + P_i^{yz}\label{Equation:Padding Inequality}\\
m(G^2_i)^{xz} & \leq m(G^2_i)^{xy} + m(G^2_i)^{yz} \label{Equation:Matching Inequality}
\end{align}

Inequality (\ref{Equation:Padding Inequality}) means that the padding cost satisfies the triangular inequality and Inequality (\ref{Equation:Matching Inequality}) represents the minimal cost for the bipartite graph matching and satisfies the triangular inequality.

First of all, we prove that the padding cost satisfies the triangular inequality.
\begin{proof}
Let $L_i(x)$ be the $i$th level of $k$-adjacent tree extracted from node $x$. Similarly, $L_i(y)$ and $L_i(z)$ are the levels for nodes $y$ and $z$ respectively. According to Algorithm \ref{alg:TED*_algorithm}, $P_i$ $=$ $\big||L_i(x)|$ - $|L_i(y)|\big|$. Then we have:

\begin{align*}
P_i^{xz}  &= \big||L_i(x)| - |L_i(z)|\big|\\
			&= \big|(|L_i(x)| - |L_i(y)|) - (|L_i(z)| - |L_i(y)|)\big|\\
			&\leq \big||L_i(x)| - |L_i(y)|\big| + \big||L_i(z)| - |L_i(y)|\big|\\
			&= P_i^{xy} + P_i^{yz}
\end{align*}

Therefore, Inequality (\ref{Equation:Padding Inequality}) holds.
\end{proof}

Next, we prove that the minimal cost for bipartite graph matching satisfies the triangular inequality.
\begin{proof}
Let $L_i(x)$ be the $i$th level of $k$-adjacent tree extracted from node $x$. Similarly, $L_i(y)$ and $L_i(z)$ are the levels for nodes $y$ and $z$ respectively.

Let $f$ be the bijective mapping function from level $L_i(x)$ to level $L_i(z)$ which satisfies the minimal bipartite graph matching. Similarly, let $g$ and $h$ be the bijective mapping functions from level $L_i(x)$ to level $L_i(y)$ and from level $L_i(y)$ to level $L_i(z)$. Then, for any node $\alpha$ $\in$ $L_i(x)$, we have $f(\alpha)$ $\in$ $L_i(z)$. Also, for any node $\alpha$ $\in$ $L_i(x)$, we have $g(\alpha)$ $\in$ $L_i(y)$ and for any node $\beta$ $\in$ $L_i(y)$, we have $h(\beta)$ $\in$ $L_i(z)$.

According to Algorithm \ref{alg:TED*_algorithm}, we can rewrite the minimal cost for bipartite graph matching $m(G^2_i)^{xz}$, $m(G^2_i)^{xy}$ and $m(G^2_i)^{yz}$ as follows:
\begin{equation}
  \left\{
 	\begin{aligned}
	m(G^2_i)^{xz} = \sum w(\alpha,f(\alpha))\\
	m(G^2_i)^{xy} = \sum w(\alpha,g(\alpha))\\
	m(G^2_i)^{yz} = \sum w(\beta,h(\beta))\\
	\end{aligned}
  \right.
\end{equation}

First we prove that the weights in three bipartite graphs satisfy the triangular inequality:
\begin{equation}
w(\alpha,\gamma) \leq w(\alpha,\beta) + w(\beta,\gamma)
\label{Equation:Weights Inequality}
\end{equation}

Since the weights are calculated  using $w(x,y)$ $=$ $|S(x) \setminus S(y)|$ $+$ $|S(y) \setminus S(x)|$ according to Algorithm \ref{alg:TED*_algorithm}, the Inequality (\ref{Equation:Weights Inequality}) can be transformed to:
\begin{equation}
\begin{aligned}
|S(\alpha)\setminus S(\gamma)| +|S(\gamma)\setminus S(\alpha)| \leq |S(\alpha)\setminus S(\beta)|+|S(\beta)\setminus S(\alpha)| +|S(\beta)\setminus S(\gamma)|+|S(\gamma)\setminus S(\beta)|
\end{aligned}
\label{Equation:Weights Inequality Transformation}
\end{equation}

Let $e$ be an element. If $e$ $\in$ $S(\alpha)$ and $e$ $\notin$ $S(\gamma)$, then $e$ costs one disagreement for $|S(\alpha) \setminus S(\gamma)|$. If $e$ $\notin$ $S(\alpha)$ and $e$ $\in$ $S(\gamma)$, then $e$ costs one disagreement for $|S(\gamma) \setminus S(\alpha)|$. For both situations, consider whether $e$ $\in$ $S(\beta)$ or $e$ $\notin$ $S(\beta)$. We can see that for all situations, Inequality (\ref{Equation:Weights Inequality Transformation}) holds.

Then we prove the inequality (\ref{Equation:Matching Inequality}). Since $f$, $g$ and $h$ are all bijective mapping functions, so we know for any node $\alpha$ $\in$ $L_i(x)$, both $f(\alpha)$ $\in$ $L_i(z)$ and $h(g(\alpha))$ $\in$ $L_i(z)$ hold. Then according to Inequality (\ref{Equation:Weights Inequality}), we have:
\begin{equation}
w(\alpha,h(g(\alpha))) \leq w(\alpha,g(\alpha)) + w(g(\alpha),h(g(\alpha)))
\end{equation}

Because $m(G^2_i)^{xz}$ $=$ $\sum w(\alpha,f(\alpha))$ is the minimal matching so we have:

\begin{align*}
m(G^2_i)^{xz}  &= \sum w(\alpha,f(\alpha))\\
				 &\leq \sum w(\alpha,h(g(\alpha)))\\
				 &\leq \sum w(\alpha,g(\alpha)) + \sum w(g(\alpha),h(g(\alpha)))\\
				 &= m(G^2_i)^{xy} + m(G^2_i)^{yz}
\end{align*}

Therefore, the Inequality (\ref{Equation:Matching Inequality}) is proved.
\end{proof}

Because both Inequality (\ref{Equation:Padding Inequality}) and Inequality (\ref{Equation:Matching Inequality}) hold for each level, we can prove that the TED* satisfies the triangular inequality overall.

\section{Isomorphism Computability}
\label{sec: Isomorphism Computability}

In this section, we discuss why we chose to use the neighborhood tree to represent a node's neighborhood topological structure rather than the neighborhood subgraph.

We define the node identity property as follow:  
\begin{Definition}
For two nodes $u$ and $v$, $u \equiv v$ if and only if  $\delta(u,v)$ $=$ $0$.
\label{Node Equivalency}
\end{Definition}

The above definition means that two nodes are equivalent if and only if, the distance between the two nodes is $0$. In our NED distance between inter-graph nodes, the neighborhood $k$-adjacent trees are the signatures of nodes. Therefore, the rooted tree isomorphism should represent the node equivalency. However, if the $k$-hop neighborhood subgraphs are the signatures of nodes,  subgraph isomorphism should represent the node equivalency. It is easy to prove that to satisfy the identity in the metric properties, the computation problem of node similarity with $k$-hop neighborhood subgraphs is as hard as graph isomorphism, which belongs to class $\NP$, but not known to belong to class $\P$.

\begin{Lemma}
Given two nodes $u$ $\in$ $G_u$, $v$ $\in$ $G_v$ and a value $k$. Let $\delta(u, v)$ be a distance between nodes $u$ and $v$ , where $\delta(u, v)$ $=$ $0$ if and only if two $k$-hop neighborhood subgraphs $G_s(u, k)$ and $G_s(v, k)$ are isomorphic and $v$ $=$ $f(u)$, where $f$ is the bijective node mapping that makes two graphs isomorphic. Then, the computation of distance function $\delta$ is at least hard as graph isomorphism problem.
\end{Lemma}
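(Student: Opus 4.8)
The plan is to exhibit a polynomial-time reduction from the graph isomorphism problem to the computation of $\delta$, so that any algorithm that evaluates $\delta$ immediately yields one that decides graph isomorphism. Starting from an arbitrary graph-isomorphism instance consisting of two graphs $G_1$ and $G_2$, I would first construct two augmented graphs $G_1'$ and $G_2'$: add a single fresh vertex $r_1$ to $G_1$ and join it to every vertex of $G_1$, and symmetrically add a vertex $r_2$ to $G_2$ joined to every vertex of $G_2$. This construction is clearly computable in time polynomial in the input size.

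The central claim is that $G_1 \cong G_2$ if and only if there is an isomorphism from $G_1'$ to $G_2'$ carrying $r_1$ to $r_2$. One direction is immediate: any isomorphism $\psi : G_1 \to G_2$ extends to an isomorphism $G_1' \to G_2'$ by additionally setting $r_1 \mapsto r_2$, since the augmenting vertices are each adjacent to all original vertices. For the converse, if $\phi : G_1' \to G_2'$ is an isomorphism with $\phi(r_1) = r_2$, then $\phi$ restricts to a bijection $V(G_1) \to V(G_2)$; because the edges of $G_1'$ not incident to $r_1$ are exactly the edges of $G_1$, this restriction preserves precisely the edges of $G_1$ and is therefore an isomorphism $G_1 \to G_2$.

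Next I would tie this to $\delta$ through the $k$-hop neighborhood subgraph. Since $r_1$ is adjacent to every vertex of $G_1$, every vertex of $G_1'$ lies within distance $1$ of $r_1$, so for every $k \geq 1$ the neighborhood subgraph $G_s(r_1, k)$ is exactly $G_1'$, and likewise $G_s(r_2, k) = G_2'$. By the property assumed in the statement, $\delta(r_1, r_2) = 0$ holds if and only if $G_s(r_1, k)$ and $G_s(r_2, k)$ are isomorphic under a bijection sending $r_1$ to $r_2$, which by the claim is equivalent to $G_1 \cong G_2$. Hence, fixing any $k \geq 1$, computing $\delta(r_1, r_2)$ and reporting ``isomorphic'' exactly when the value is $0$ decides graph isomorphism, which establishes the reduction and the lemma.

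The reduction itself is elementary; the only delicate point is the ``root maps to root'' half of the correspondence, and this is exactly where the built-in condition $v = f(u)$ in the definition of $\delta$ does the real work: it forces the augmenting vertices to be matched to each other and thereby reduces the subgraph-isomorphism question to the original graphs. I expect this to be the step worth emphasizing, because without the $v = f(u)$ requirement one would instead have to argue that $r_1$ and $r_2$ are the unique vertices an isomorphism can distinguish (say, by degree), an argument that breaks down when the original graphs are complete. Keeping $\delta$'s identity condition explicit is what makes the argument correct for arbitrary $G_1$ and $G_2$.
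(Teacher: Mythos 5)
Your proposal is correct and follows essentially the same route as the paper: augment each graph with a fresh universal vertex, observe that its $k$-hop neighborhood subgraph is the entire augmented graph, and use the identity condition $v=f(u)$ to force the roots to correspond, so that $\delta=0$ decides isomorphism of the original graphs. Your write-up is in fact more careful than the paper's (which only treats $k=1$ and does not spell out the ``root maps to root'' direction), but it is the same reduction.
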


\begin{proof}
There exists a polynomial-time reduction from the graph isomorphism problem to the computation problem of distance function $\delta(u, v)$. Given two graphs $G_u$ and $G_v$, we can add two new nodes $u$ and $v$ to $G$ and $G_v$ respectively. Let node $u$ connect all nodes in $G$ and $v$ connect all nodes in $G_v$. Then two graphs $G_u$ and $G_v$ are converted to two $1$-hop neighborhood subgraphs rooted at $u$ and $v$ denoted as $G_s(u,1)$ and $G_s(v, 1)$ separately. If and only if the distance $\delta(u, v)$ $=$ $0$, $G_u$ and $G_v$ can be isomorphic. So, the computation of the distance function can verify the graph isomorphism. Therefore the computation problem of distance function should be at least as hard as graph isomorphism. If the distance function can be computed in polynomial-time, it means that the graph isomorphism can be verified in polynomial time by using the distance function: if the distance is $0$, two graphs are isomorphic, otherwise not. However, the graph isomorphism does not known to belong to class $\P$.
\end{proof}

Lemma 2 guarantees that no matter what kind of distance functions is chosen, if the nodes are compared based on their $k$-hop neighborhood subgraphs, the distance cannot be both polynomially computable and satisfy the identity property. Therefore, in order to have a method that is polynomially computable and satisfy the metric properties we chose to use trees and not graphs.

Actually, graph edit distance (GED) is a metric which can be a distance function to compare inter-graph nodes based on $k$-hop neighborhood subgraphs. Unfortunately, due to the computability issue proved above, the computation of graph edit distance is known to belong to $\NP$-Hard problems\cite{DBLP:journals/pvldb/ZengTWFZ09}.

Notice that Jin et al. \cite{DBLP:conf/kdd/JinLH11} proposed a set of axiomatic role similarity properties for intra-graph node measures. The major difference between axiomatic properties and metric properties in this paper is that: in the set of axiomatic role similarity properties, the identity is verified in single direction. Namely, if two nodes are automorphic, the distance is $0$. Whereas, if the distance is $0$, two nodes may not be automorphic. The reason of single direction verification is because the node automorphism is chosen to represent the node equivalency. Whereas, the graph automorphism problem is also unknown to belong to class $\P$.

Therefore, in this paper, we choose tree structure to represent the neighborhood topology of a node. Because the tree isomorphism problem can be solved in polynomial time even for unlabeled unordered trees. It becomes possible to construct a polynomial-time computable distance function for comparing inter-graph nodes based on neighborhood trees and at the same time the node distance satisfies the node identity in Definition \ref{Node Equivalency}. Notice that, any spanning tree extracted from the neighborhood can be the representation of a node. In this paper we adopt the $k$-adjacent tree as an example, since $k$-adjacent tree can be deterministically extracted and according to Section \ref{sec:TED*, TED and GED Comparison}, it shows that $k$-adjacent tree is able to precisely capture the neighborhood topological information of a node.

\section{Complexity Analysis}
\label{sec: Complexity Analysis}
The TED* algorithm in Algorithm \ref{alg:TED*_algorithm} is executed level by level and includes $6$ steps sequentially.  Let the number of levels be $k$ and the size of level $L_i(u)$ and level $L_i(v)$ be $n$.

The node padding can be executed in $O(n)$ time and the node canonization can be calculated in $O(n \log n)$ time in our Algorithm \ref{alg:Canonization}. The bipartite graph construction needs $O(n^2)$ time to generate all weights for a completed bipartite graph. The most time consuming part is the bipartite graph matching. We use the improved Hungarian algorithm to solve the bipartite graph matching problem with time complexity  $O(n^3)$. The matching cost calculation can be executed in constant time and node re-canonization is in $O(n)$.

Clearly, the time complexity is dominant by the bipartite graph matching part which cost $O(n^3)$. Notice that, $n$ is the number of nodes per level. Therefore, the overall time complexity of computing TED* should be $O(kn^3)$. Indeed, for the real-world applications, we show that the parameter $k$ will not be large. Therefore, the TED* can be computed efficiently in practice.

\section{Parameter K and Monotonicity}
\label{sec: Parameter K and Monotonicity}

In NED,  there is only one parameter $k$ which represents how many levels of neighbors should be considered in the comparison. There exists a monotonicity property on the distance and the parameter k in NED. Let $u$ and $v$ be two nodes and let $T(u, k)$ and $T(v, k)$ be two $k$-adjacent trees of those two nodes respectively. Denote $\delta_T$ as the TED* between two $k$-adjacent trees. The monotonicity property is defined as follows:

\begin{Lemma}
$\delta_T(T(u, x), T(v, x))$ $\leq$ $\delta_T(T(u, y), T(v, y))$, $\forall x, y >0 $ and $x \leq y$
\label{Lemma:Monotonicity}
\end{Lemma}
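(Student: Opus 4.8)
The plan is to work directly from Definition \ref{Definition:TED*}, which expresses $\delta_T(T_1,T_2)$ as the minimum length $\min|E|$ over all valid edit sequences $E_v$, together with the defining feature of the three TED* operations from Section \ref{sec: Edit Operations in TED*}: none of them ever changes the depth of an existing node. For a tree $T$ write $\tau_x(T)$ for the subtree formed by its top $x$ levels. Because a $y$-adjacent tree is the top-$y$-level subtree of the full BFS tree (Definition \ref{Definition:K-Adjacent Tree}), we have $\tau_x(T(u,y))=T(u,x)$ and $\tau_x(T(v,y))=T(v,x)$ whenever $x\le y$; moreover a rooted-tree isomorphism preserves depth, so it restricts to an isomorphism of the top-$x$-level subtrees. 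Hence $T(u,y)\simeq T(v,y)$ forces $\tau_x(T(u,y))\simeq\tau_x(T(v,y))$, i.e. $T(u,x)\simeq T(v,x)$.

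The core idea is a projection argument. First I would fix an optimal valid edit sequence $E^\ast=\{e_1,\dots,e_m\}$ converting $T(u,y)$ into a tree isomorphic to $T(v,y)$, so that $m=\delta_T(T(u,y),T(v,y))$, and record the intermediate trees $T(u,y)=T_0\to T_1\to\cdots\to T_m$. Each operation $e_j$ acts on a node sitting at a well-defined level, and since no TED* operation changes depth, a node that is present or inserted at a level $>x$ stays at a level $>x$ throughout and therefore never appears in the truncation. Consequently every $e_j$ that acts at a level $>x$ leaves the truncation unchanged, i.e. $\tau_x(T_{j-1})=\tau_x(T_j)$, because insertions or deletions of deep leaves and re-parentings among levels $>x$ are invisible to levels $1,\dots,x$.

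Next I would define the projected sequence $E'$ as the subsequence of those $e_j$ that act at a level $\le x$, applied in the same relative order to $T(u,x)=\tau_x(T_0)$. The key verification is that $E'$ is a valid TED* sequence and that applying it reproduces exactly the chain $\tau_x(T_0),\tau_x(T_1),\dots,\tau_x(T_m)$. I would check this operation by operation: an ``insert a leaf node'' at level $i\le x$ stays an insertion of a leaf, since its parent lies at level $i-1<x$ and hence survives in the truncation; a ``delete a leaf node'' in the full tree removes a node with no children at all, so that node is still a leaf of the truncation; and a ``move a node at the same level'' at level $i\le x$ changes the node's parent between two nodes of level $i-1<x$, which remains a well-defined same-level move on the truncated subtree, carrying the truncated subtree along with it. Since the endpoint is $\tau_x(T_m)\simeq T(v,x)$, the sequence $E'$ converts $T(u,x)$ into a tree isomorphic to $T(v,x)$, so it is valid.

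Finally, because $E'$ is obtained from $E^\ast$ by discarding the (non-negative number of) operations at levels $>x$, we have $|E'|\le|E^\ast|$. By the minimality in Definition \ref{Definition:TED*},
\[
\delta_T(T(u,x),T(v,x))\le|E'|\le|E^\ast|=\delta_T(T(u,y),T(v,y)),
\]
which is exactly the claimed monotonicity, the case $x=y$ being trivial equality. The hard part will be the operation-by-operation validity check in the projection step, and in particular arguing rigorously that discarding every level-$>x$ operation is harmless: this rests entirely on the depth-preservation property of the TED* operations, which guarantees that deep edits can never interfere with the top-$x$ structure. The claim that a full-tree isomorphism restricts to a truncation isomorphism, and the step-by-step identity $\tau_x(T_{j-1})\to\tau_x(T_j)$, are the other places where I would be most careful.
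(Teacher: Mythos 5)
Your proof is correct, but it follows a genuinely different route from the paper's. The paper argues from the algorithmic cost decomposition $\delta_T=\sum_i(P_i+M_i)$ of Equation (\ref{equation:TED* Calculation}): it notes that the padding costs on levels $1,\dots,x$ coincide for the $x$-trees and the $y$-trees, that the extra levels $x+1,\dots,y$ contribute non-negative cost, and that the matching cost at level $x$ can only be smaller when all level-$x$ nodes are leaves (as in $T(u,x)$ and $T(v,x)$) than when they carry children with possibly differing canonization labels (as in the $y$-trees). You instead argue purely from Definition \ref{Definition:TED*}: you take an optimal edit script for the $y$-trees and project it onto the top-$x$ truncation, using the depth-preservation property of the TED* operations to show that operations below level $x$ are invisible to the truncation while operations at levels $\le x$ restrict to valid operations on it, so the projected script is a valid (not necessarily optimal) script of no greater length for the $x$-trees. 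Your route buys independence from the algorithm: it does not rest on Lemma \ref{Lemma:Correctness}, nor on the paper's informally justified claims that $M^y_x\ge M^x_x$ and that the matching costs agree on all levels above $x$ (which really require an induction on how re-canonization labels propagate); it also transfers immediately to weighted variants with level-wise consistent weights. What the paper's route buys is a finer, component-by-component comparison of the padding and matching costs, in the form actually produced by Algorithm \ref{alg:TED*_algorithm}. The places you rightly flag as needing care---that every intermediate tree has well-defined depths so the truncation is well-defined at each step, and that validity is up to rooted isomorphism, which preserves depth and hence restricts to the truncations---are exactly the points to spell out, and both go through.
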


\begin{proof}
The proof of Lemma \ref{Lemma:Monotonicity} is based on the procedures in Algorithm \ref{alg:TED*_algorithm}. In Lemma \ref{Lemma:Monotonicity} , $x$ and $y$ are total levels for $k$-adjacent trees. According to Equation \ref{equation:TED* Calculation}, $\delta_T(T(u, y), T(v, y))$ can be rewritten  as:
\begin{equation}
\delta_T(T(u, y), T(v, y)) = \sum_{i=1}^{y}(P^y_i + M^y_i) ;
\end{equation}

While $\delta_T(T(u, x), T(v, x))$ can be rewritten  as:
\begin{equation}
\delta_T(T(u, x), T(v, x)) = \sum_{i=1}^{x}(P^x_i + M^x_i) ;
\end{equation}

Since the padding cost and matching cost are non-negative in each level as proven in Section \ref{sec: Identity}.  It is obvious that
\begin{equation}
\sum_{i=x}^{y}(P^y_i + M^y_i) \geq 0
\end{equation}

Then we try to prove that
\begin{equation}
\sum_{i=1}^{x}(P^y_i + M^y_i) \geq \sum_{i=1}^{x}(P^x_i + M^x_i)
\end{equation}

According to the algorithm, for the levels from $1$ to $x$, $P^y_i $ $=$ $P^x_i $, since the top $x$ levels of $T(u, y)$ and $T(v, y)$ should have the same topological structures as $T(u, x)$ and $T(v, x)$ respectively. Meanwhile, we have $M^y_x$ $\geq$ $M^x_x$, because at the $x$th level, the children of nodes in $T(u, y)$ and $T(v, y)$ may have different canonization labels, but the nodes in $T(u, x)$ and $T(v, x)$ are all leaf nodes. So the matching cost between $T(u, x)$ and $T(v, x)$ at the $x$th level should not be larger than the matching cost between $T(u, y)$ and $T(v, y)$ at the $x$th level. For all the levels above the $x$th level, the matching cost for two distances should be the same.

Therefore, for a given pair of nodes $u$ and $v$, if $x$ $\leq$ $y$, then $\delta_T(T(u, x), T(v, x))$ cannot be larger than $\delta_T(T(u, y), T(v, y))$
\end{proof}

The monotonicity property is useful for picking the parameter $k$ for specific tasks as follows: the node similarity, NED, for a smaller parameter $k$ is a lower bound of  NED for a larger parameter $k$. Then, for nearest neighbor similarity node queries,  increasing $k$ may reduce the number of ``equal" nearest neighbor nodes in the result set. For top-$l$ similarity node ranking, increasing $k$ may break the ties in the rank. In Section \ref{sec:Parameter K Analysis}, we show how the monotonicity property affects the query quality using real world datasets.

\section{TED*, TED and GED}
\label{sec:TED*, TED and GED}
In this section we briefly discuss the differences among modified tree edit distance, TED*, tree edit distance (TED) and graph edit distance (GED) from two aspects: edit operations and edit distances.

In unlabeled TED, there are only two types of edit operations: insert a node and delete a node. The edit operations in TED guarantee that no matter which operation is applied to a tree, the result is still a tree structure. However, the edit operations in unlabeled GED are different. The edit operations in GED are categorized into node operations and edge operations. Therefore in GED there can be an edge insertion or deletion without changing any node. While only isolated nodes can be inserted or deleted in GED. Different from TED and GED, TED* has another set of edit operations: ``Inserting a leaf node'', ``Deleting a leaf node'' and ``Moving a node in the same level''. All edit operations in TED* preserve the tree structure which is the same as TED. Moreover, similar to TED, TED* also has edit operations on nodes only.

Indeed, the edit operations in TED and TED* can be represented as a series of edit operations in GED. For example, inserting a node in TED is equivalent to inserting an isolated node and several edges which connect to the new node. There is no one to one mapping between an edit operation in TED and a series of edit operations in GED. However, there exists a one-to-one mapping from the edit operation in TED* to exactly two edit operations in GED. ``Inserting a leaf node'' in TED* is equivalent to inserting an isolated node and inserting an edge in GED. The inserted edge connects the new node and an existing node. ``Deleting a leaf node'' in TED* is equivalent to deleting an edge and deleting an isolated node in GED. ``Moving a node in the same level'' in TED* can be represented as deleting an edge and inserting an edge in GED. The node is moved by changing its parent node which is the same as detaching from the former parent node and attaching to the new parent node. Therefore, by applying GED on tree structures, we have the following bound:
\begin{equation}
\delta_{GED}(t_1, t_2) \leq 2 \times \delta_T(t_1, t_2)
\label{equation:GED_TED*}
\end{equation}
where $\delta_{GED}$ is the distance on GED and $\delta_T$ is the distance on TED*.

When calculating the edit distance between a pair of unlabeled unordered trees, the edit distance in TED* can be smaller or larger than the edit distance in TED. Since TED allows to insert or delete intermediate node (with parent and children) in a tree and such one operation should be done by a series of edit operations in TED*. TED may be smaller than TED*. Whereas since TED* allows to move a node in the same level but TED does not have a such edit operation, TED has to use a series of edit operations to match one  ``Moving a node in the same level'' operation in TED*. Therefore, TED may be larger than TED*. Indeed, according to our experiment in Section \ref{sec:TED*, TED and GED Comparison}, we show that TED* is pretty closed to TED even though they have different set of edit operations. In Section \ref{sec:Weighted TED*}, we propose a weighted TED* which can be an upper-bound of TED.

\section{Weighted TED*}
\label{sec:Weighted TED*}

In the TED* of this paper, all the edit operations have the same cost as $1$. However, sometimes, maybe different edit operations should be assigned to different costs. Indeed, the nodes at different levels of the $k$-adjacent tree should have different impacts on TED* distance and the nodes which are more closed to the root should play more important roles. In this section, we introduce the weighted TED* and prove that the weighted TED* is still a metric.

Let the ``Inserting a leaf node'' operation and the``Deleting a leaf node'' operations at the level $i$ have the weights of $w^1_i$. The ``Moving a node in the same level'' operations at the level $i$ have the weights of $w^2_i$. Therefore, we can calculate the weighted TED* according to Equation \ref{equation:TED* Calculation}. Then the weighted TED* can be rewritten as $\delta_{T(W)}$:
\begin{equation*}
\begin{aligned}
\delta_{T(W)} &=  \sum_{i=1}^{k}(w^1_i * P_i + w^2_i * M_i) \\
		     &= \sum_{i=2}^{k} w^1_i * P_i/2 + \sum_{i=1}^{k-1}w^2_i * m(G^2_i)/2
\end{aligned}	     
\end{equation*}

\begin{Lemma}
$\delta_{T(W)}$  is a metric if $\forall$ $w^1_i $ $>$ $0$, $\forall$ $w^2_i $ $>$ $0$
\label{Lemma:Weighted TED*}
\end{Lemma}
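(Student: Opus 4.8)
The plan is to reduce the metric properties of the weighted TED* $\delta_{T(W)}$ to the corresponding properties already established for the unweighted TED* $\delta_T$ in the preceding sections. The four metric axioms are non-negativity, symmetry, identity, and the triangular inequality. Since all weights $w^1_i$ and $w^2_i$ are strictly positive, the first three properties will follow almost immediately, and the real work (as in the unweighted case) will lie in the triangular inequality.

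First I would handle non-negativity, symmetry, and identity. Non-negativity is immediate: each $P_i$ and each $m(G^2_i)$ is non-negative (as noted in Section~\ref{sec: Identity}), and multiplying by positive weights $w^1_i, w^2_i$ preserves non-negativity, so $\delta_{T(W)} \geq 0$. Symmetry carries over unchanged, since the padding cost $P_i = \big||L_i(u)| - |L_i(v)|\big|$ and the minimal matching cost $m(G^2_i)$ are both symmetric in the two trees, and the per-level weights depend only on the level index $i$, not on the order of the arguments. For identity, I would argue in both directions exactly as in Lemma~\ref{Lemma:Identity}: if the two trees are isomorphic, every $P_i = 0$ and every $m(G^2_i) = 0$, hence $\delta_{T(W)} = 0$ regardless of the weights; conversely, if $\delta_{T(W)} = 0$, then since every term $w^1_i P_i/2$ and $w^2_i m(G^2_i)/2$ is non-negative and the weights are strictly positive, each $P_i$ and each $m(G^2_i)$ must individually vanish, which forces the trees to be isomorphic. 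The strict positivity of the weights is the crucial hypothesis here — it is exactly what prevents a nonzero structural discrepancy at some level from being annihilated by a zero weight.

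For the triangular inequality, the key observation is that the weights attached to level $i$ are the \emph{same} constant $w^1_i$ (respectively $w^2_i$) across all three pairwise distances $\delta_{T(W)}(x,y)$, $\delta_{T(W)}(y,z)$, $\delta_{T(W)}(x,z)$, because the weight depends only on the level index and not on which trees are being compared. Therefore I would invoke the two per-level inequalities already proved in Section~\ref{sec: Triangular Inequality}, namely $P_i^{xz} \leq P_i^{xy} + P_i^{yz}$ (Inequality~\ref{Equation:Padding Inequality}) and $m(G^2_i)^{xz} \leq m(G^2_i)^{xy} + m(G^2_i)^{yz}$ (Inequality~\ref{Equation:Matching Inequality}), and multiply each through by the appropriate positive constant. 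Since multiplying a valid inequality by a positive scalar preserves its direction, I get $w^1_i P_i^{xz} \leq w^1_i(P_i^{xy} + P_i^{yz})$ and $w^2_i m(G^2_i)^{xz} \leq w^2_i(m(G^2_i)^{xy} + m(G^2_i)^{yz})$ for every level $i$. Summing these weighted inequalities over all levels and regrouping the terms on the right-hand side into the two weighted sums corresponding to $\delta_{T(W)}(x,y)$ and $\delta_{T(W)}(y,z)$ yields $\delta_{T(W)}(x,z) \leq \delta_{T(W)}(x,y) + \delta_{T(W)}(y,z)$.

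**The main obstacle** I anticipate is not any single hard inequality but rather ensuring that the per-level decomposition still lines up correctly once the weights are introduced. In the unweighted proof the triangular inequality was established level by level and then summed freely; here I must be careful that the weights are genuinely level-indexed constants independent of the tree pair, so that the same $w^1_i$ and $w^2_i$ factor uniformly out of all three terms at level $i$. As long as this uniformity holds — which it does by the definition of the weighting scheme — the positivity of the weights does all the remaining work, and no new combinatorial argument about the bipartite matching or the padding is needed beyond what Section~\ref{sec: Triangular Inequality} already supplies.
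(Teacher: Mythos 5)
Your proposal is correct and follows essentially the same route as the paper's own proof: reuse the per-level inequalities (\ref{Equation:Padding Inequality}) and (\ref{Equation:Matching Inequality}) from Section~\ref{sec: Triangular Inequality}, scale each by the positive level-indexed weights, and sum, while non-negativity, symmetry, and identity carry over from the unweighted case. If anything, you are more explicit than the paper about where strict positivity of the weights is actually needed (namely, to force each $P_i$ and $m(G^2_i)$ to vanish in the identity direction), which is a welcome clarification rather than a deviation.
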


\begin{proof}
The proofs of non-negativity, symmetry and node equivalency should be the same as the proofs for the original TED* in Section \ref{sec: Metric Proof}. Since all weights are positive, the non-negativity and equivalency still preserve.

For the triangular inequality, the Inequality \ref{Equation:Padding Inequality} and Inequality \ref{Equation:Matching Inequality} still hold as:
\begin{align}
&P_i^{xz} \leq P_i^{xy} + P_i^{yz}\\
&m(G^2_i)^{xz} \leq m(G^2_i)^{xy} + m(G^2_i)^{yz}
\end{align}

The proofs of above two equations are the same as in Section \ref{sec: Metric Proof}. Therefore, we still have:
\begin{equation}
\delta_{T(W)} (x,z) \leq \delta_{T(W)} (x,y) + \delta_{T(W)} (y,z)
\end{equation}
\end{proof}

As discussed in Section \ref{sec:TED*, TED and GED}, because the edit operations in TED* and TED are different, TED* may be smaller or larger than TED. By assigning the weights for different edit operations in TED*, we can provide an upper-bound for TED.
\begin{Definition}
$\delta_{T(W+)} $ $=$ $\sum_{i=1}^{k}(P_i + 4 * i * M_i) $
\label{Definition:TED-W+}
\end{Definition}

Since $\delta_{T(W+)}$ is a weighted TED* where $w^1_i $ $=$ $1$ and $w^2_i $ $=$ $4*i$, then $\delta_{T(W+)}$ is still a metric. Furthermore, $\delta_{T(W+)}$ is also an upper-bound for TED.

\begin{Lemma}
$\delta_{T(W+)}(T_1, T_2)$  $\geq$ $\delta_{TED}(T_1, T_2)$
\label{Lemma:TED-W+}
\end{Lemma}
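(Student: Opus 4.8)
The plan is to prove the bound by \emph{exhibiting} a single TED edit script that transforms $T_1$ into a tree isomorphic to $T_2$ and whose unit-cost length is at most $\delta_{T(W+)}(T_1,T_2)$. Since $\delta_{TED}(T_1,T_2)$ is the minimum length over \emph{all} such scripts, producing one script of length $\le \delta_{T(W+)}$ immediately gives $\delta_{TED}\le\delta_{T(W+)}$. I would start from an optimal TED* script realizing $\delta_{T(W+)}$ which, by Lemma~\ref{Lemma:Correctness}, decomposes into $P_i$ leaf insertions/deletions and $M_i$ ``move'' operations grouped by level, and then replace each TED* operation by an equivalent block of genuine TED node operations.

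The two padding operations need no work: ``insert a leaf'' and ``delete a leaf'' are already special cases of TED node insertion and node deletion, so each is simulated by exactly one TED operation. Because $w^1_i=1$, these contribute a cost of exactly $\sum_i P_i$ to the simulating script, matching their contribution to $\delta_{T(W+)}$. The entire difficulty is therefore concentrated in simulating a ``move a node at the same level'' operation, for which TED has no direct counterpart.

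The key step is to simulate the relocation of a subtree rooted at a node $n$ from its current parent $p$ to a new parent $p'$ (both at the same level) using only node insertions and deletions, at a cost that depends on the \emph{depth} of $n$ rather than on the \emph{size} of its subtree. I would route the subtree through the lowest common ancestor $a$ of $p$ and $p'$ one level at a time: to lift the subtree $T_n$ from being a child of a path node $q$ to being a child of $q$'s parent, delete $q$ (which promotes all of $q$'s children, including $T_n$, to $q$'s parent) and then re-insert $q$ under the same parent adopting exactly its former children \emph{except} $T_n$; this restores $q$'s subtree up to isomorphism and raises $T_n$ by one level at a cost of $2$ operations, and a symmetric two-operation block lowers $T_n$ one level toward $p'$. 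Since $M_i$ counts moves of nodes whose parent lies at level $i$, each such node sits at level $i+1$, its path to $a$ has length at most $i$ on each side, and the whole move is realized by at most $2\cdot 2\cdot i = 4i$ TED operations. Because the simulation is size-independent and (as the trees are unordered) re-adopting the displaced children restores each path node's subtree exactly, this block is a valid TED script of length $\le 4i = w^2_i$.

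Finally I would concatenate these blocks in a consistent order (padding first, then the moves processed so that the re-routings for distinct moves act on disjoint or already-restored portions of the tree) and sum the costs, obtaining a TED script of total length at most $\sum_i (P_i + 4i\,M_i) = \delta_{T(W+)}(T_1,T_2)$. The main obstacle is precisely this move simulation: the naive ``delete the whole subtree and re-insert it'' costs a number of operations proportional to the subtree size and is bounded by no function of $i$, so the crux is to establish that the depth-bounded, size-independent re-routing above is always valid, and that pinning the factor to exactly $4i$ is legitimate (this is where identifying a move counted by $M_i$ with a level-$(i+1)$ node matters), together with verifying that the blocks for separate moves compose into one global script without interfering or double-counting.
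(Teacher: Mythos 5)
Your proposal is correct and follows essentially the same route as the paper's proof: translate each TED* operation into a block of genuine TED operations, with leaf insertions/deletions mapping one-to-one (cost $1 = w^1_i$) and each level-$i$ move simulated by deleting and re-inserting ancestors along the path (the paper routes through the root, you through the lowest common ancestor), yielding at most $4i = w^2_i$ TED operations per move. Your write-up is in fact more explicit than the paper's about why the simulation costs $2$ operations per level independently of subtree size and about how the blocks compose, but the underlying argument is the same.
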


\begin{proof}
To prove the Lemma \ref{Lemma:TED-W+},  we need to prove that the series of edit operation for $\delta_{T(W+)}(T_1, T_2)$ can be translated into a valid series of edit operations in TED and the cost for $\delta_{T(W+)}(T_1, T_2)$ is equal to number of edit operations in TED.

Firstly, since all operations of ``inserting a leaf node" and ``deleting a leaf node" in $\delta_{T(W+)}(T_1, T_2)$ have the same weight of $1$ which are the valid edit operations in TED. Therefore, all inserting and deleting operations in $\delta_{T(W+)}(T_1, T_2)$ can be one-to-one translated to edit operations in TED.

The ``moving a node" operation in $\delta_{T(W+)}(T_1, T_2)$ can be translated into a series of node edit operations in TED. Assume the node to be moved is at the $i$th level, we can use $4*i$ edit operations in TED to move this node by deleting former parent nodes to the root and inserting new parent nodes from the root back to the $i$th level. At the same time, the previous parents are preserved. Therefore, every moving operation in $\delta_{T(W+)}(T_1, T_2)$ can be translated to $4*i$ edit operations in TED.

Then there exists a valid series of edit operations in TED which transforms $T_1$ to $T_2$ and the cost is $\delta_{T(W+)}(T_1, T_2)$. Then the TED distance between $T_1$ and $T_2$ is no larger than $\delta_{T(W+)}(T_1, T_2)$.
\end{proof}

\section{Experiments}
\label{Experiments}

In this section, we empirically evaluate the efficiency and effectiveness of the inter-graph node similarity based on edit distance. All experiments are conducted on a computing node with 2.9GHz CPU and 32GB RAM running 64-bit CentOS 7 operating system. All comparisons are implemented in Java.

In particular, we evaluate TED* and NED over $5$ aspects: 1) Compare the efficiency and distance values of TED* against the original tree edit distance and the graph edit distance; 2) Evaluate the performance  of TED* with different sizes of trees; 3) Analyze the effect of parameter $k$ on NED; 4) Compare the computation and nearest neighbor query performance of NED with HITS-based similarity and Feature-based similarity; and 5) Provide a case study for graph de-anonymization.

We denote the original tree edit distance on unordered trees as TED, the graph edit distance as GED, the HITS-based similarity measurement as HITS, and the Feature-based similarity as Feature. Notice that, the Feature-based similarity here means ReFeX. The OddBall and NetSimile are simplified versions of ReFeX with parameter $k = 1$.

The datasets used in the experiments are real-world graphs that come from the KONECT~\cite{DBLP:conf/www/Kunegis13} and SNAP~\cite{snapnets} datasets. Table \ref{Table:Datasets Summary} summarizes the statistical information of the $6$ graphs and the abbreviations that we use.

All the distances in the experiments are computed using  pair of nodes from two different graphs to verify the ability of inter-graph node similarity.

\begin{table}
\centering
\caption{Datasets Summary}
\begin{tabular}{|l|r|r|} \hline
Dataset & $\mathsmaller{\#}$ Nodes & $\mathsmaller{\#}$ Edges\\ \hline \hline
CA Road (CAR) & 1,965,206 & 2,766,607\\ \hline
PA Road (PAR) & 1,088,092 & 1,541,898\\ \hline
Amazon (AMZN) & 334,863 & 925,872\\ \hline
DBLP (DBLP) & 317,080 & 1,049,866\\ \hline
Gnutella (GNU) & 62,586 & 147,892\\ \hline
Pretty Good Privacy (PGP) & 10,680 & 24,316\\ \hline
\end{tabular}
\label{Table:Datasets Summary}
\end{table}

\subsection{TED*, TED and GED Comparison}
\label{sec:TED*, TED and GED Comparison}

\begin{figure*}
\centering
	\begin{subfigure}[b]{0.48\textwidth}
		\epsfig{file=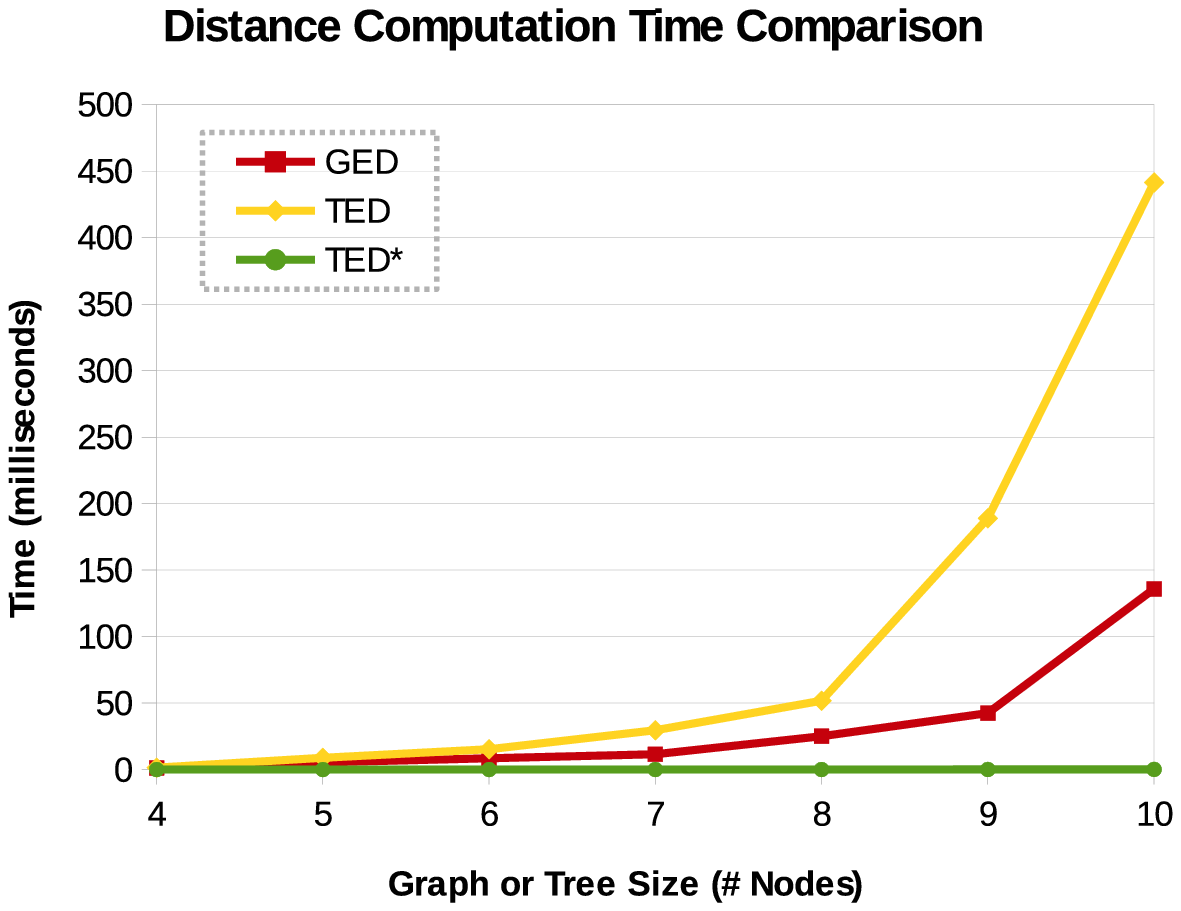,scale=0.6}
		\caption{Computation Time}
		\label{fig:Comparisons Time}
	\end{subfigure}
	\begin{subfigure}[b]{0.48\textwidth}
		\epsfig{file=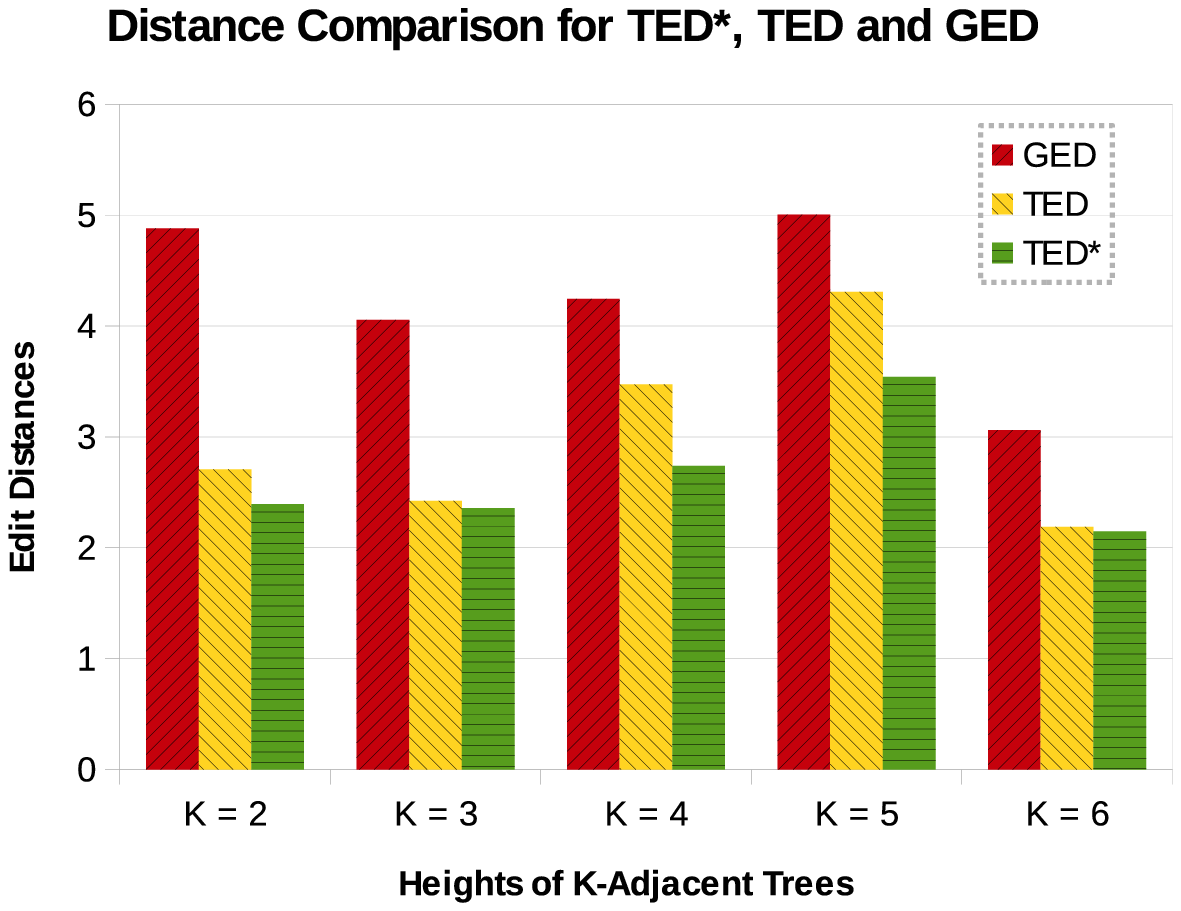,scale=0.6}
		\caption{Distance Values}
		\label{fig:Distance Values}
	\end{subfigure}
\caption{Comparisons among TED*, TED and GED}
\label{fig:Comparisons among TED*, TED and GED}
\end{figure*}

\begin{figure*}
\centering
	\begin{subfigure}[b]{0.48\textwidth}
		\epsfig{file=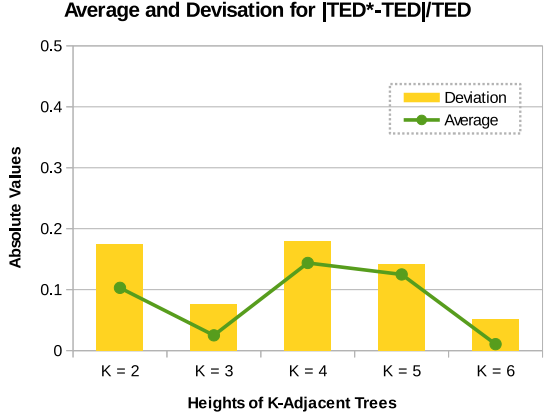,scale=0.6}
		\caption{Distance Difference}
		\label{fig:Distance Difference}
	\end{subfigure}
	\begin{subfigure}[b]{0.48\textwidth}
		\epsfig{file=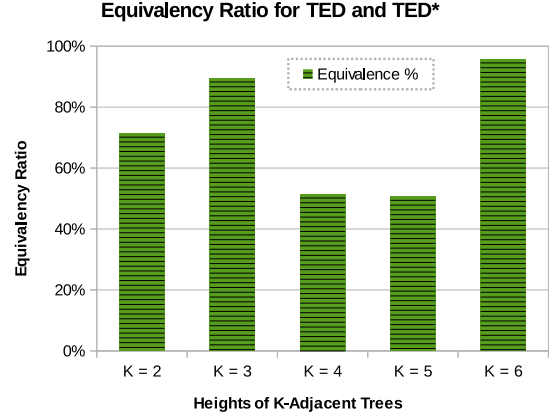,scale=0.6}
		\caption{Equivalency Ratio}
		\label{fig:Equivalency Ratio}
	\end{subfigure}
\caption{Comparisons among TED*, TED and GED Cont.}
\label{fig:Comparisons among TED*, TED and GED Cont.}
\end{figure*}

In this section, we check the efficiency of TED* and compare how close is the computed distance to TED on the same trees and the GED distance on the k-hop subgraphs around the same nodes.

In Figure \ref{fig:Comparisons Time}, we show the computation time of TED*, TED, and GED. TED* is not only polynomially computable but is also very efficient. On the other hand, computing the exact values for GED and TED on unordered trees is expensive since they are NP-Complete problems. The most widely used method for computing them is using the A*-based  algorithm \cite{DBLP:journals/tssc/HartNR68}. As illustrated in \cite{DBLP:conf/sspr/NeuhausRB06, DBLP:journals/pvldb/ZengTWFZ09},  this method can only deal with small graphs and trees with only up to 10-12 nodes. As the number of nodes increases, the searching tree in A* algorithm grows exponentially. However, TED* is able to compare unordered trees up to hundred nodes in milliseconds as shown in Section \ref{sec:TED* and NED Computation}. In this experiment, $400$ pairs of nodes are randomly picked from (CAR) and (PAR) graphs respectively. The $k$-adjacent trees and $k$-hop subgraphs are extracted to conduct TED*, TED, and GED.

In Figure \ref{fig:Distance Values}, we show the distance values for TED*, TED, and GED. It is clear that, by using $k$-adjacent trees, TED* and TED are able to compare the neighborhood structures between two nodes. As shown, TED* is slightly smaller than TED in some cases because TED* has a ``move a node" operation which should be done by a series of edit operations in TED. Moreover, when parameter $k$ is relatively small for $k$-adjacent trees, there are quite few intermediate-node insertions and deletions for TED.

In Figure \ref{fig:Distance Difference}, we show the difference between TED* and TED in details. The figure shows the average and standard deviation of the relative error between TED* and TED. The difference is calculated by 
\begin{align*}
|TED - TED*|/TED
\end{align*}

The average is between $0.04$ to  $0.14$ and the deviation below $0.2$. This means that in most cases the TED and TED* values are almost the same.

In Figure \ref{fig:Equivalency Ratio} we show for how many pairs the TED* are {\bf exactly} the same as TED. It is clear that for most $k$ more than $50\%$ of the cases the two distances are exactly the same and in some cases this can get to more than $80\%$.  This shows that TED* and TED are actually very close on the trees that we were able to compute the exact value for TED.

\subsection{TED* and NED Computation}
\label{sec:TED* and NED Computation}

\begin{figure*}
\centering
	\begin{subfigure}[b]{0.48\textwidth}
		\epsfig{file=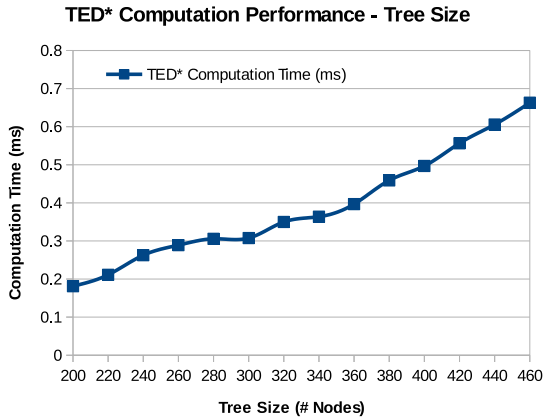,scale=0.6}
		\caption{TED* Computation Time}
		\label{fig:TED* Computation Time}
	\end{subfigure}
	\begin{subfigure}[b]{0.48\textwidth}
		\epsfig{file=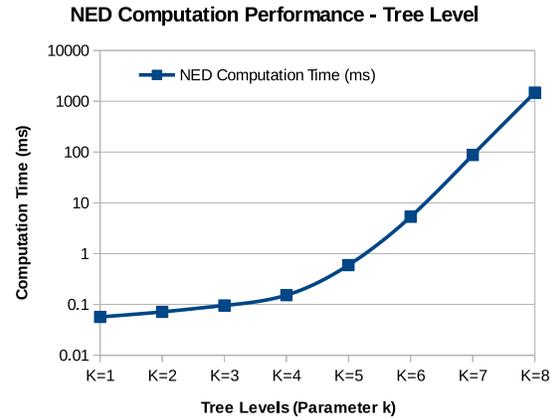,scale=0.6}
		\caption{NED Computation Time}
		\label{fig:NED Computation Time}
	\end{subfigure}
\caption{Computation Time of TED* and NED}
\label{Fig:Computation Time of TED* and NED}
\end{figure*}

\begin{figure*}
\centering
	\begin{subfigure}[b]{0.48\textwidth}
		\epsfig{file=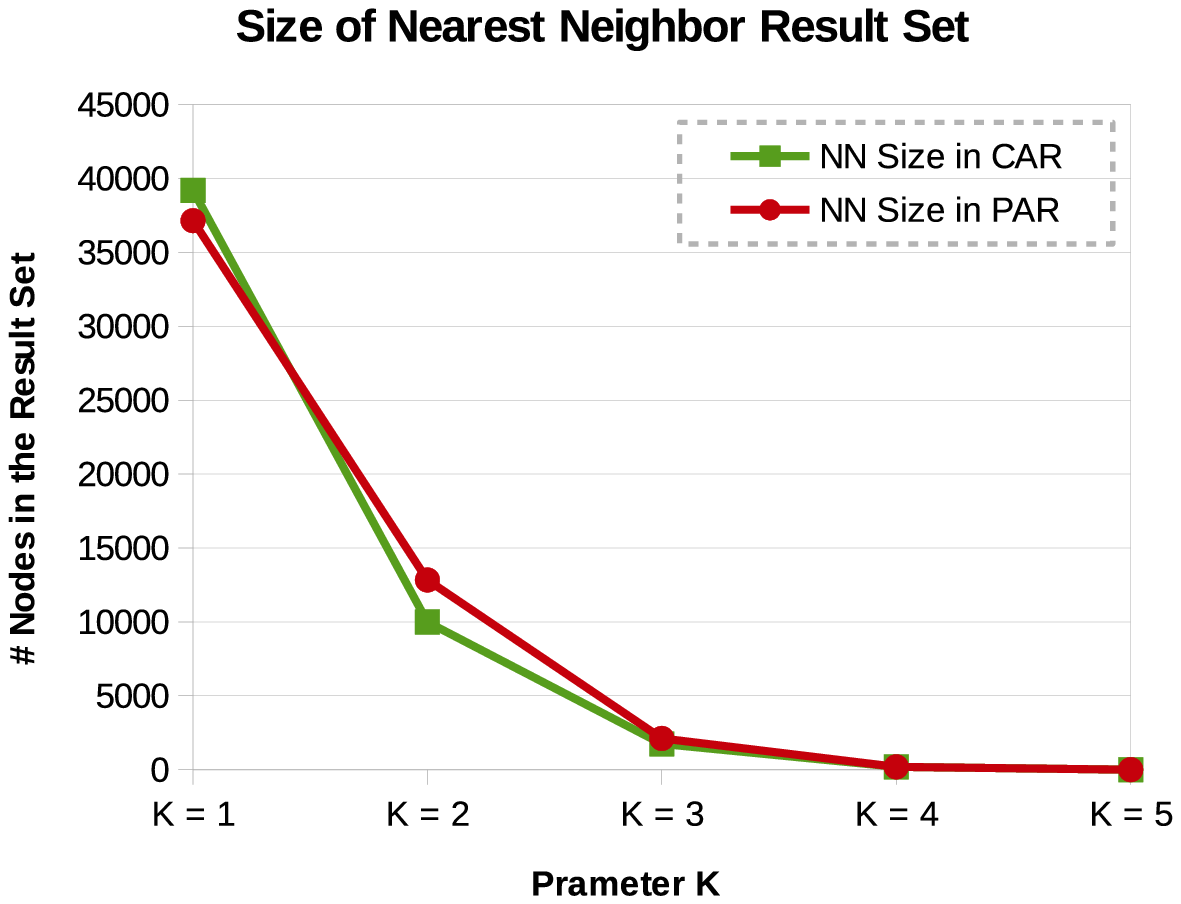,scale=0.6}
		\caption{Nearest Neighbor Query}
		\label{fig:K_NN}
	\end{subfigure}
	\begin{subfigure}[b]{0.48\textwidth}
		\epsfig{file=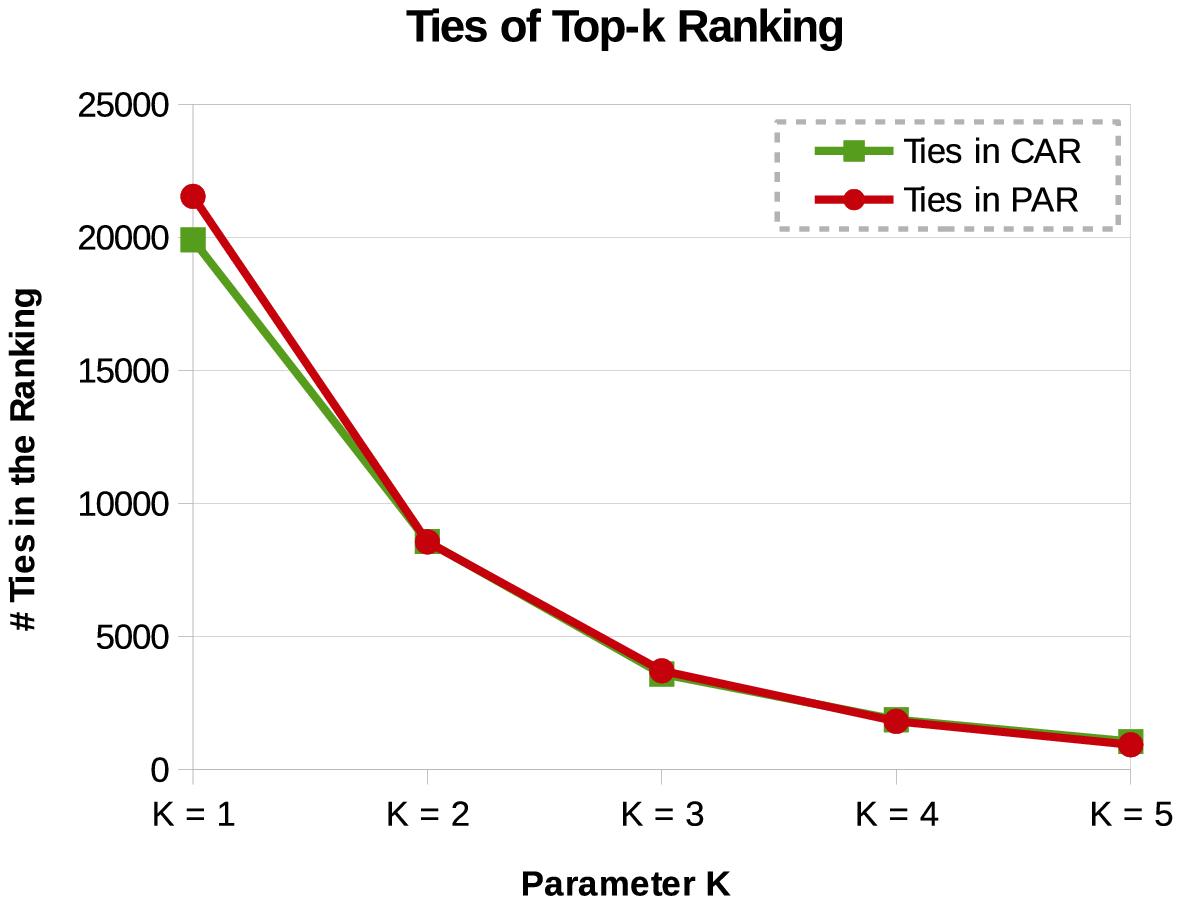,scale=0.6}
		\caption{Top-l Ranking}
		\label{fig:K_Ranking}
	\end{subfigure}
\caption{ Analysis of Parameter $k$  in NED}
\label{Fig:Parameter $k$ Analysis in NED}
\end{figure*}

In Figure \ref{fig:TED* Computation Time}, we plot the time to compute TED* with different tree sizes. In this experiment, we extract $3$-adjacent trees from nodes in (AMZN) graph and (DBLP) graphs respectively.  As shown in the Section \ref{sec:TED*, TED and GED Comparison}, the exact TED and GED cannot deal with trees and graphs with more than $10$ nodes. However, TED* is able to compute the distance between a pair of trees with up to $500$ nodes in one millisecond.

Figure \ref{fig:NED Computation Time} plots the NED computation time for different tree levels as $k$ changes. We randomly select $1000$ nodes from (CAR) graph and (PAR) graph. For each node, we extract the $k$-adjacent trees where $k$ varies from $1$ to $8$. For different $k$, the average NED computation time is computed for each pair. It is obvious that when the parameter $k$ increases, the computation time increases. When the value of $k$ is under $5$, the computation time is within one millisecond. Next, we show that, the parameter $k$ does not need to be very large ($5$ is large enough) for nearest neighbor queries and top-$l$ ranking queries to give meaningful results.

\subsection{Analysis of Parameter k}
\label{sec:Parameter K Analysis}

There is only one parameter $k$ in NED which is number of levels (hops) of neighbors to be considered in the comparison. In this section, we use nearest neighbor and top-$l$ ranking queries to show the effects of parameter $k$ on the query results.

The nearest neighbor query task is the following: for a given node in one graph, find the $l$ most similar nodes in another graph. When the parameter $k$ is small, more nodes in the graph can have the same minimal distance (usually $0$) to a given node.  When $k$ increases, the NED increases monotonically as proved in Section \ref{sec: Parameter K and Monotonicity}. Therefore, by choosing different parameter $k$, we can control the number of nodes in the nearest neighbor result set. Figure \ref{fig:K_NN} shows the number of nodes in the nearest neighbor result set for different values of  $k$. In the experiment, we randomly pick $100$ nodes from (CAR) and (PAR) graphs as queries and the nodes in the other graph are computed. It is obvious that when the parameter $k$ increases, the number of nodes in the nearest neighbor result set decreases.

The effect of parameter $k$ for the top-$l$ ranking query indicates how many identical distances (ties) that appear in the ranking. As shown in Figure \ref{fig:K_Ranking}, the ties start to break when $k$ increases. Intuitively, it is more likely to have isomorphic neighborhood structures if fewer levels of structures are considered. Figure \ref{fig:K_Ranking} shows the number of ties in the top-$l$ ranking for different values of $k$.  The experimental setting is the same as in the nearest neighbor query.

Choosing a proper value for the parameter $k$ depends on the query speed and quality. When $k$ increases, the computation time of NED increases as shown in Section \ref{sec:TED* and NED Computation}. On the other hand, when $k$ increases, both the number of nodes in the nearest neighbor result set and the number of ties in the ranking decreases. So it is clear that there exists a trade-off between the query speed and quality. Furthermore, the proper value of $k$ depends on the specific applications that the graphs come from.

\subsection{Query Comparison}
\label{sec:Query Comparison}

\begin{figure*}
\centering
	\begin{subfigure}[b]{0.48\textwidth}
		\epsfig{file=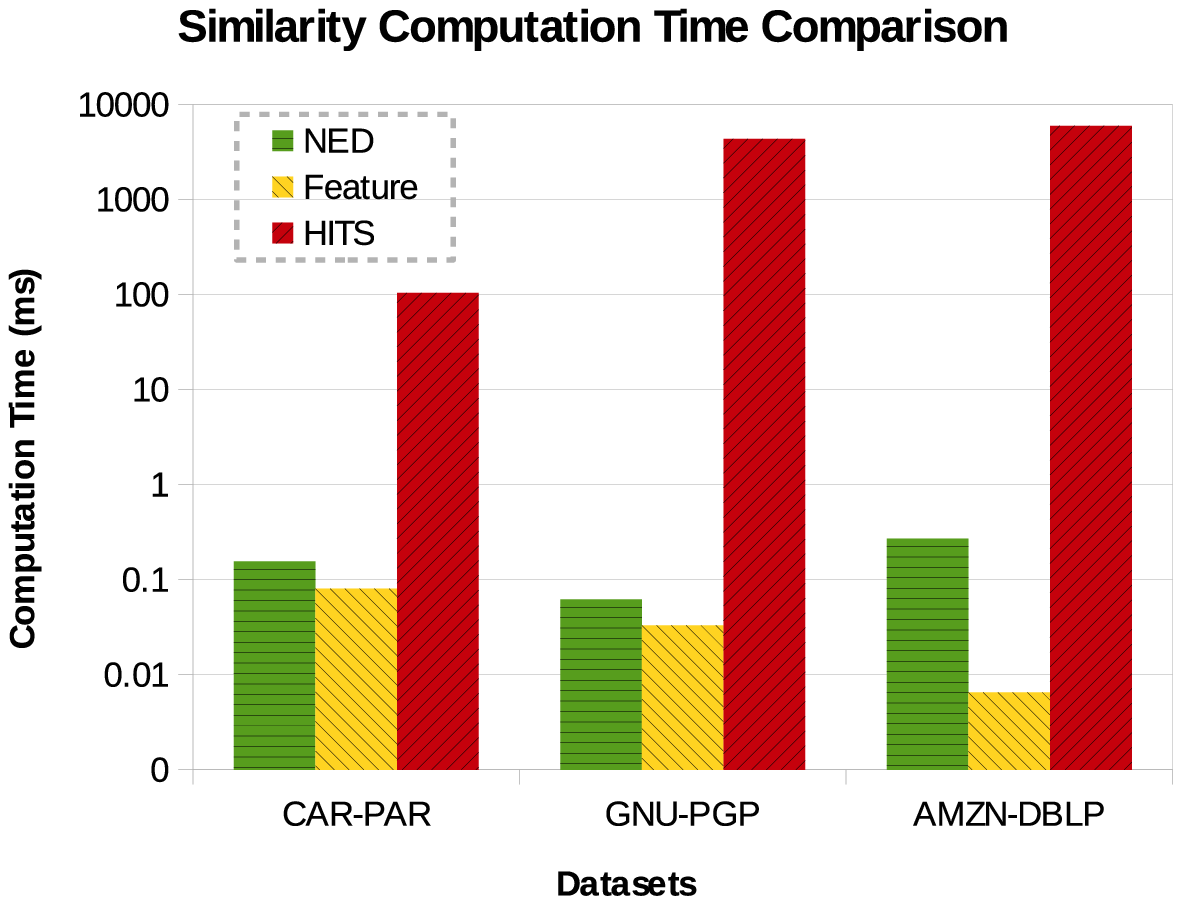,scale=0.6}
		\caption{Computation Time}
		\label{fig:NED_Feature_HITS_Time} 
	\end{subfigure}
	\begin{subfigure}[b]{0.48\textwidth}
		\epsfig{file=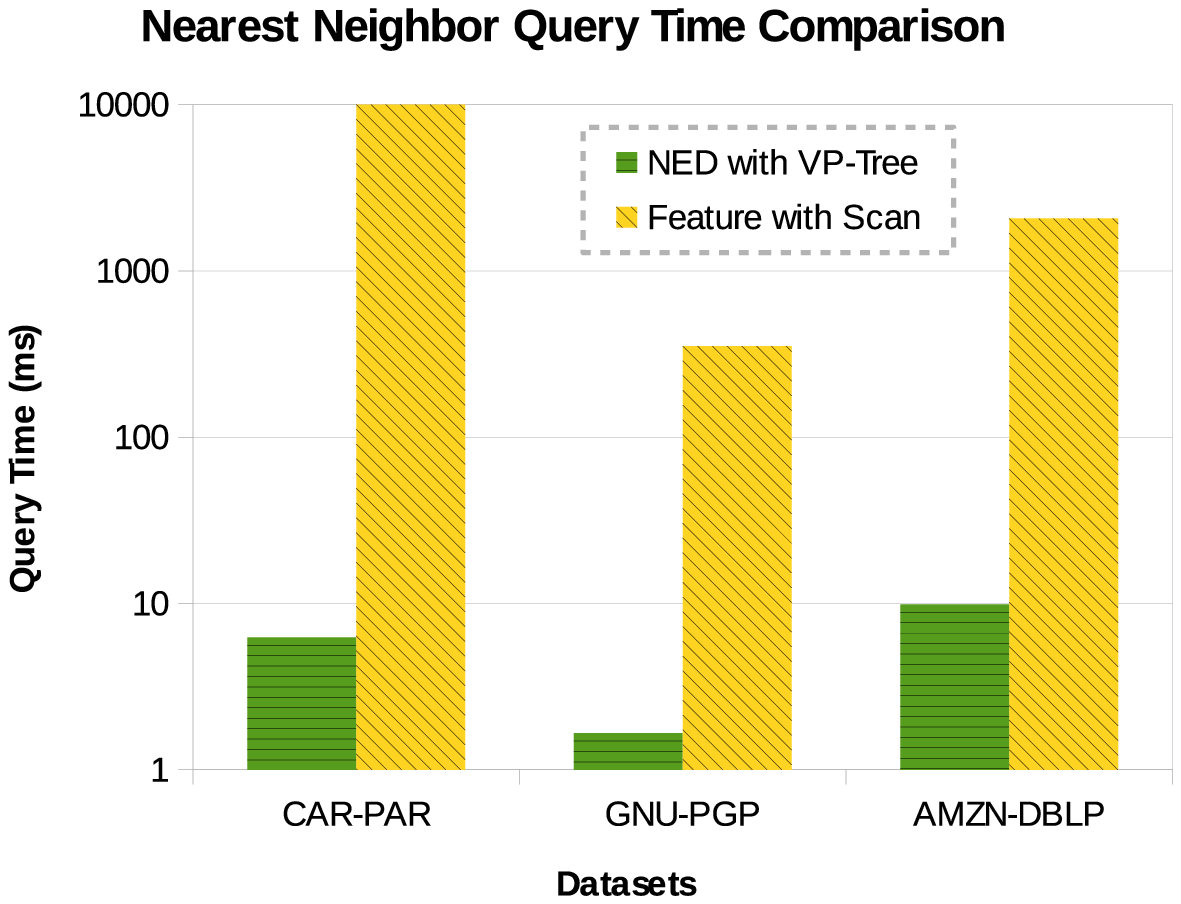,scale=0.6}
		\caption{NN Query Time}
		\label{fig:NED_Feature_HITS_Query} 
	\end{subfigure}
\caption{Node Similarity Comparison}
\label{Fig:Node Similarity Comparison}
\end{figure*}

In this section, we compare the performance of NED with other existing inter-graph node similarity measures: HITS-based similarity and Feature-based similarity.

Figure \ref{fig:NED_Feature_HITS_Time} shows the distance computation time for NED, HITS-based similarity, and Feature-based similarity. In this experiment, we extract $5$-adjacent trees for the nodes in (CAR) and (PAR) graphs and $3$-adjacent trees for the nodes in (PGP), (GNU), (AMZN) and (DBLP) graphs. NED, HITS-based similarity, and Feature-based similarity are computed over random pairs of nodes and the average computation time for different measures are shown in Figure \ref{fig:NED_Feature_HITS_Time}.

From this figure, it is clear that HITS-based similarity is the slowest among all three methods, because the HITS-based similarity iteratively updates the similarity matrix until the matrix converges. Feature-based similarity is faster than NED which makes sense since Feature-based similarity only collects statistical information from the neighborhood. NED pays a little extra overhead to take into account more topological information and be a metric. We show later why more topological information and metric matter.

As discussed, the Feature-based similarity discards certain topological information which makes it not precise. We use graph de-anonymization in Section \ref{sec:Case Study: De-anonymizing Graphs} to show that, with more topological information, NED can achieve a higher precision in  de-anonymization compared to the Feature-based similarity.

Also since the Feature-based similarity uses different features for different pairs, the similarity values of two pairs of nodes are not comparable. When using the Feature-based similarity for nearest neighbor queries, a full scan is necessary. However, as a metric, NED has the advantage in being used with existing metric indices for efficient query processing. Figure \ref{fig:NED_Feature_HITS_Query} shows that although NED pays a little bit more time than Feature-based similarity in distance computation, by combining with a metric index (existing implementation of the VP-Tree), NED is able to execute a nearest neighbor query much faster (orders of magnitude) than the Feature-based similarity.

\subsection{Case Study: De-anonymizing Graphs}
\label{sec:Case Study: De-anonymizing Graphs}

In this section, we use graph de-anonymization as a case study to show the merits of NED. Since NED captures more topological information, it is capable to capture more robust differences between structures. Therefore, NED can achieve much higher precision than the Feature-based similarity.

In this experiment, similar to~\cite{DBLP:conf/kdd/HendersonGLAETF11}, we split (PGP) and (DBLP) graphs into training data and testing data. The training data is the graph with identification, while the testing data is the anonymous graph. As stated in \cite{DBLP:journals/tist/FuZX15}, we choose three methods to anonymize the graphs for testing data: naive anonymization, sparsification, and perturbation. For each node in the anonymous graph, we try to find top-$l$ similar nodes in the training data. If the top-$l$ similar nodes include the original identification of the anonymous node, we say that it successfully de-anonymizes this node. Otherwise, the de-anonymization fails.

In Figure \ref{fig:Deanonymize_PGP} and Figure \ref{fig:Deanonymize_DBLP}, we show the precision of de-anonymization using NED and Feature-based similarity. In the experiment, the parameter $k$ is set to $3$ and we examine the top-$5$ similar nodes  (best $5$ matches) in (PGP) data and the top-$10$ similar nodes in (DBLP) data. The permutation ratio in (PGP) is $1\%$ and in (DBLP) is $5\%$. Based on the results, NED is able to identify anonymized nodes with better accuracy than the Feature-based similarity.

In Figure \ref{fig:Permutation Ratio}, we show how the precision changes by varying permutation ratio. The precision of NED reduces slower than Feature-based similarity when the permutation ratio increases. The Figure \ref{fig:Top-K Finding} shows that when more nodes in top-$l$ results are examined, how the de-anonymization precision changes. It is clear that if fewer nodes are checked which means there are less nodes in the top-$l$ results, NED can still achieve a higher precision.

\begin{figure*}
\centering
	\begin{subfigure}[b]{0.48\textwidth}
		\epsfig{file=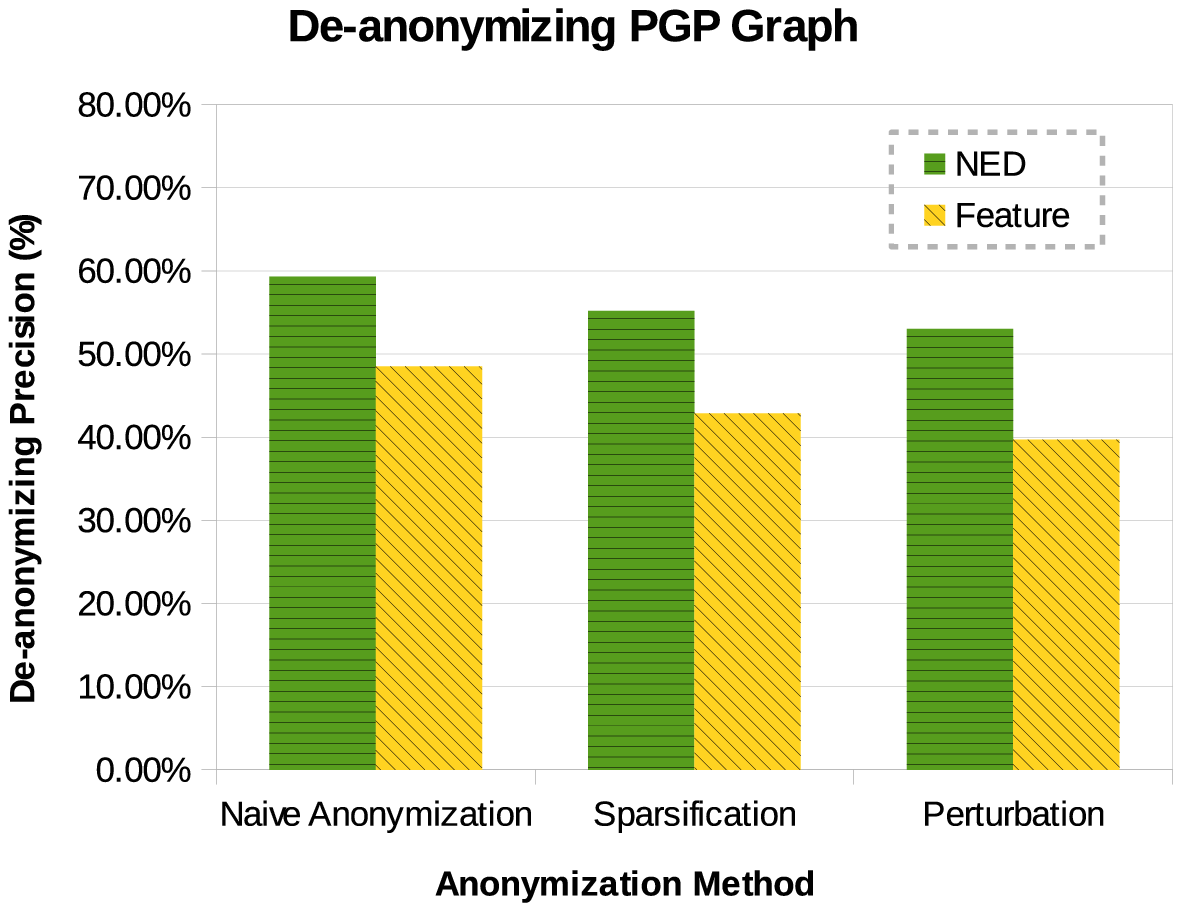,scale=0.6}
		\caption{De-Anonymize PGP}
		\label{fig:Deanonymize_PGP}
	\end{subfigure}
	\begin{subfigure}[b]{0.48\textwidth}
		\epsfig{file=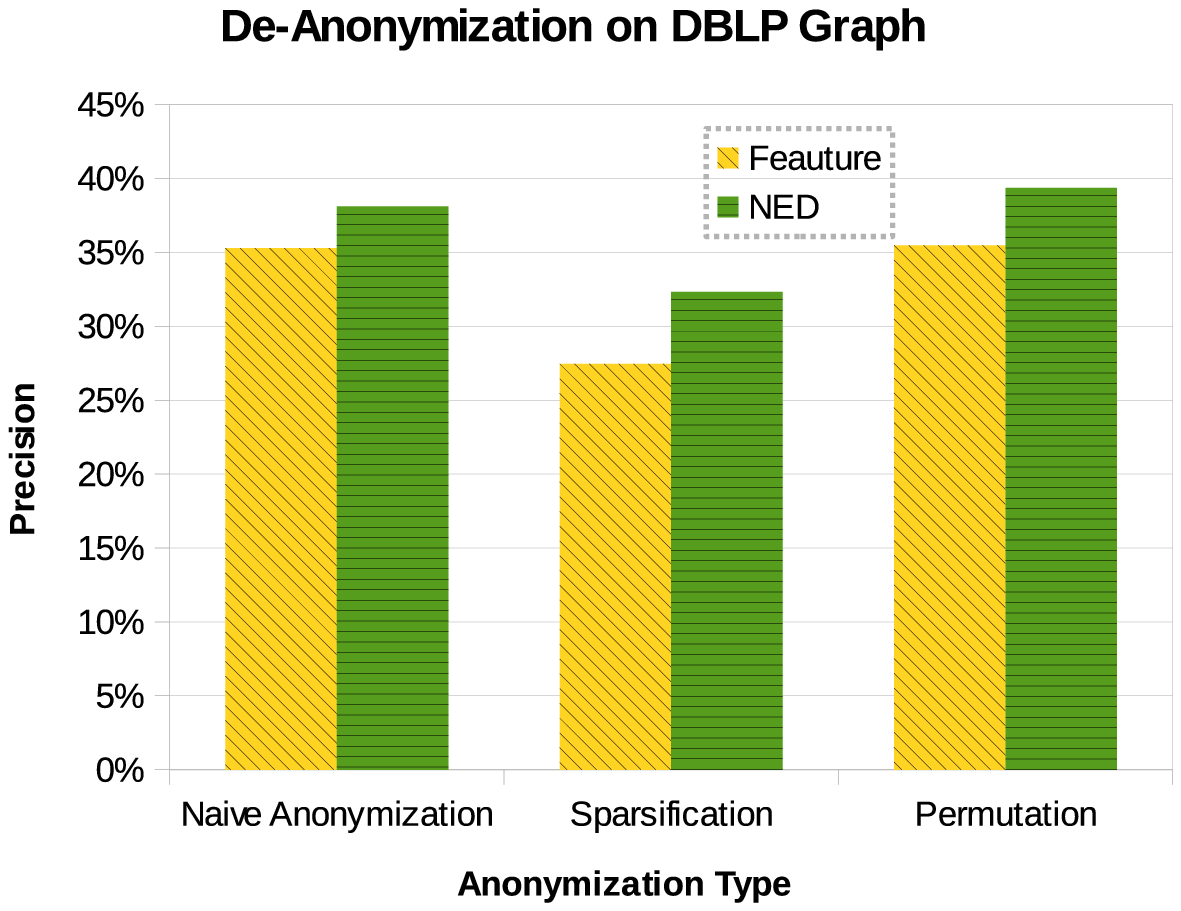,scale=0.6}
		\caption{De-Anonymize DBLP}
		\label{fig:Deanonymize_DBLP}
	\end{subfigure}
\caption{Graph De-Anonymization}
\label{Fig:Graph De-Anonymization}
\end{figure*}

\begin{figure*}
\centering
	\begin{subfigure}[b]{0.48\textwidth}
		\epsfig{file=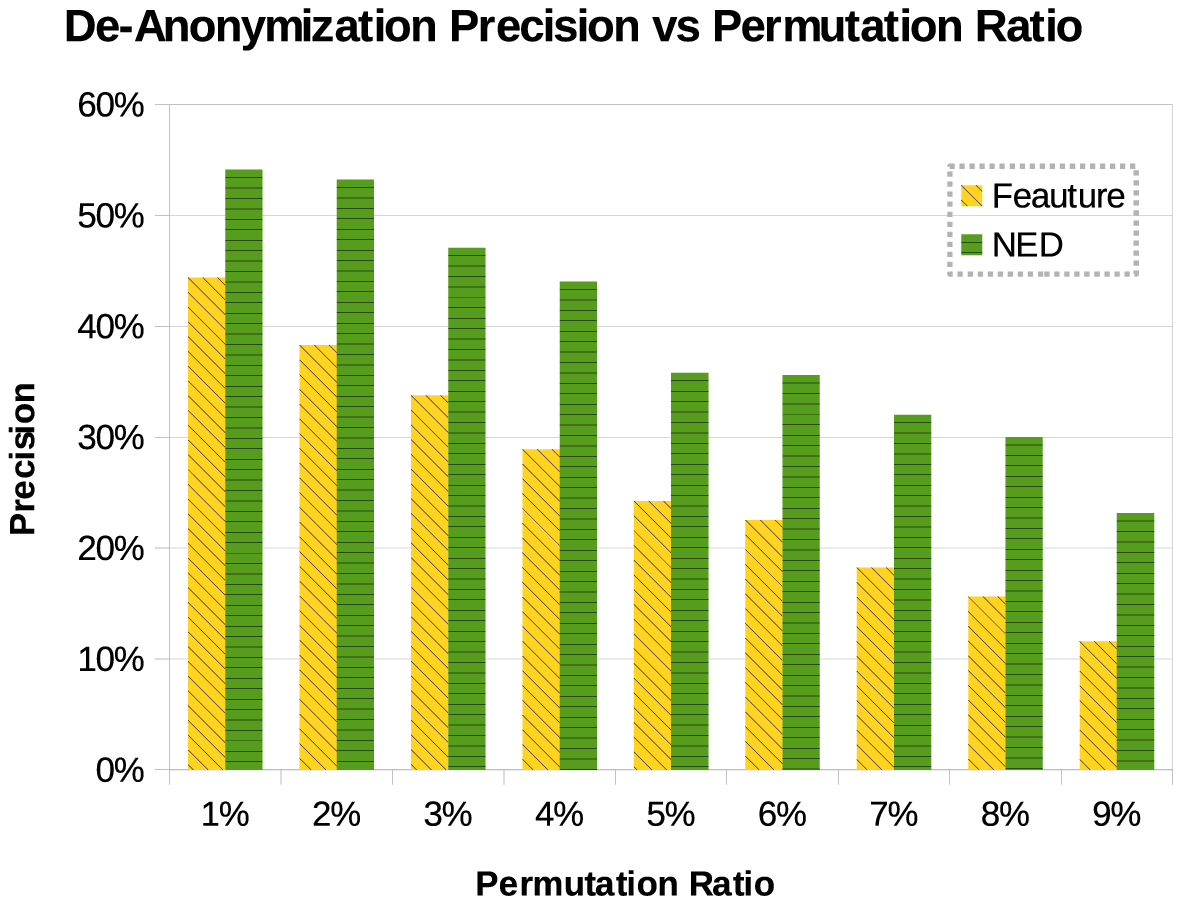,scale=0.6}
		\caption{Permutation Ratio}
		\label{fig:Permutation Ratio}
	\end{subfigure}
	\begin{subfigure}[b]{0.48\textwidth}
		\epsfig{file=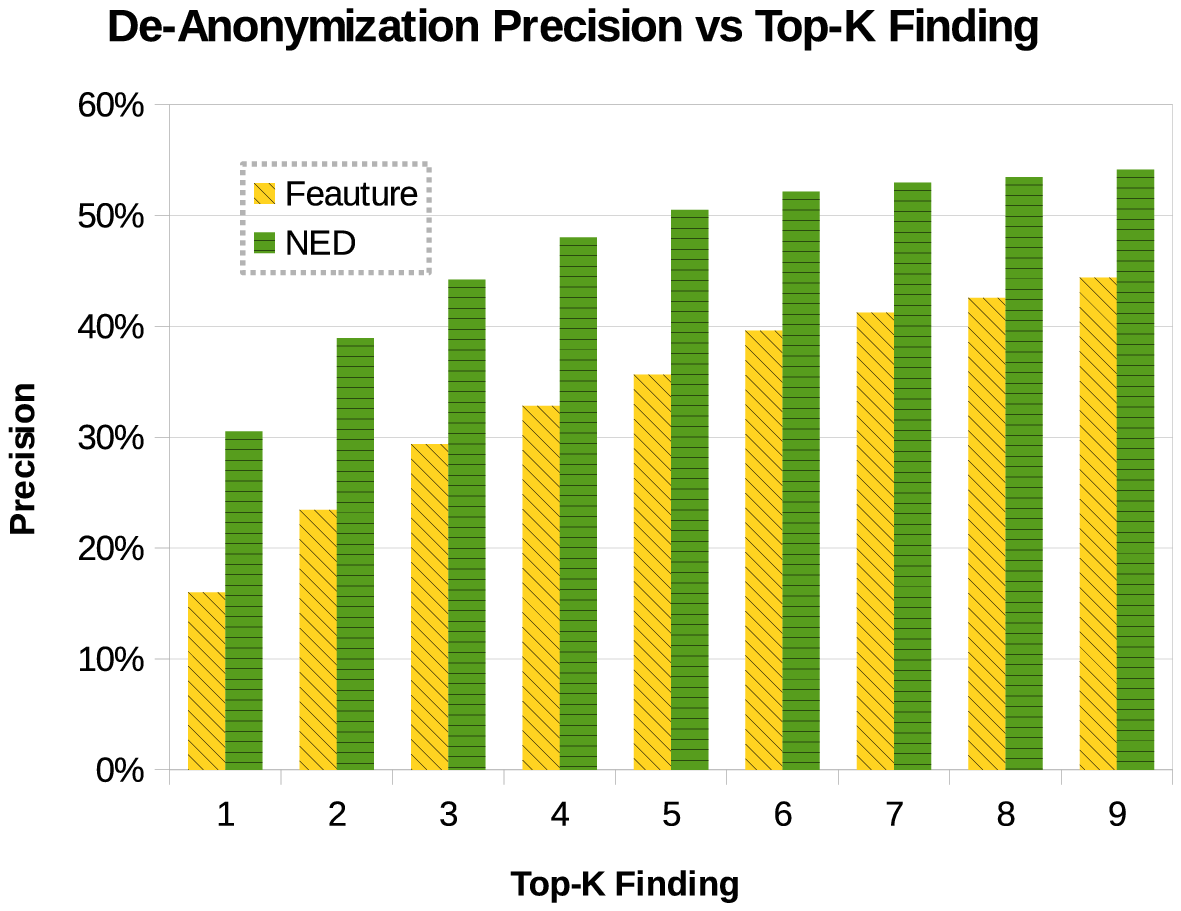,scale=0.6}
		\caption{Top-$l$ Finding}
		\label{fig:Top-K Finding}
	\end{subfigure}
\caption{Graph De-Anonymization Cont.}
\label{Fig:Graph De-Anonymization Cont.}
\end{figure*}

\section{Conclusion}
\label{Conclusion}
In this paper, we study the inter-graph node similarity problem. A major application of inter-graph node similarity is transfer learning on graphs, which means learning a new graph based on the existing knowledge from another one. To address this problem, this paper proposes a novel distance function called NED. In NED, the $k$-adjacent trees between two nodes to be compared are extracted first. Then, a newly proposed modified tree edit distance called TED* is used calculate the distance between two $k$-adjacent trees. TED* is a generic distance function for comparing trees which is easy to compute in polynomial time and satisfies the metric properties of the edit distance. 
Therefore, as a measurement on nodes, NED is a node metric. Due to the metric properties, NED is compatible with existing metric indexing methods.  Moreover, since NED captures more structural information, it is demonstrated to be a more effective and precise for graph de-anonymization than other existing methods. Indeed, we empirically verify the efficiency and effectiveness of NED using real-world graphs.

\bibliographystyle{abbrv}
\bibliography{zhu}

\appendix

\section{NED for Graph Similarity}
\label{sec: NED for Graph Similarity}

At last, we propose another promising application for node metric as one of our future work. Since a graph can be seen as a collection of nodes, one popular direction for graph similarity is to compare pairwise nodes from two graphs. If there are more paris of nodes from two graphs can be matched, the two graphs are more similar. 

There are many existing distance functions can be used to measure the similarity between two collections such as: Hausdorff distance, Fr\'{e}chet distance, Earth mover's distance and etc. However, to measure two collections of nodes, the underline distance between pairwise nodes should be a metric. Fortunately, NED proposed in our paper is a metric for inter-graph nodes.

Next, we use Hausdorff distance as an example to show how to calculate the distance between two graphs based on NED.

\begin{Definition}
Given two graphs $A = \{ a_1, a_2, ..., a_p \}$ and $B = \{ b_1, b_2, ..., b_q \}$. The Hausdorff distance between two graphs is defined as:
\begin{equation}
H(A, B) = max(h(A, B), h(B, A))
\end{equation}
where
\begin{equation}
h(A, B) = \max_{a \in A}{ \min_{b \in B} (\delta_T(T(a, k), (b, k)) )}
\end{equation}
\label{Definition:Graph_Matching}
\end{Definition}

In Definition \ref{Definition:Graph_Matching}, $a_i$ and $b_j$ are nodes in graphs $A$ and $B$ respectively. While $\delta_T(T(a, k), (b, k))$ is the TED* distance between the $k$-adjacent trees of node $a$ and $b$. When the distance function for the corresponding data points is metric as NED, the Hausdorff distance is also metric. Therefore, we provide a metric distance function to measure the similarity between two graphs. Since both Hausdorff distance and NED are polynomial-time computable. This metric for measuring graph similarity is also polynomial-time computable.

\end{document}